\documentclass[pdfa,a4paper,USenglish,cleveref,autoref,,thm-restate]{lipics-v2021}
\newcommand{\eps}{\varepsilon}
\newcommand{\diam}{\text{diam}}
\newcommand{\per}{\text{per}}
\newcommand{\proj}{\text{proj}}
\newcommand{\wdth}{\text{width}}
\newcommand{\hght}{\text{height}}
\newcommand{\aratio}{\text{aratio}}
\newcommand{\MST}{\text{MST}}

\hideLIPIcs

\title{On Euclidean Steiner \texorpdfstring{$(1+\eps)$}{(1+epsilon)}-Spanners}
\titlerunning{On Euclidean Steiner \texorpdfstring{\boldmath $(1+\eps)$}{(1+epsilon)}-Spanners}

\author{Sujoy Bhore}{Universit\'{e} Libre de Bruxelles, Brussels, Belgium}{sujoy.bhore@gmail.com}{https://orcid.org/0000-0003-0104-1659}{Research on this paper was supported by the Fonds de la Recherche Scientifique-FNRS under Grant no MISU F 6001.}
\author{Csaba D. T\'{o}th}{California State University Northridge, Los Angeles, CA, USA \and Tufts University, Medford, MA, USA}{csaba.toth@csun.edu}{https://orcid.org/0000-0002-8769-3190}{Research on this paper was partially supported by the NSF award DMS-1800734.}

\authorrunning{S. Bhore and C.\,D. T\'{o}th}
\Copyright{Sujoy Bhore and Csaba D. T\'{o}th}

\ccsdesc[500]{Mathematics of computing~Approximation algorithms}
\ccsdesc[500]{Mathematics of computing~Paths and connectivity problems}
\ccsdesc[500]{Theory of computation~Computational geometry}

\keywords{Geometric spanner, \texorpdfstring{$(1+\eps)$}{(1+epsilon)}-spanner, lightness, sparsity, minimum weight}

\relatedversiondetails{Previous Version}{https://arxiv.org/abs/2010.02908}

\EventEditors{Markus Bl\"{a}ser and Benjamin Monmege}
\EventNoEds{2}
\EventLongTitle{38th International Symposium on Theoretical Aspects of Computer Science (STACS 2021)}
\EventShortTitle{STACS 2021}
\EventAcronym{STACS}
\EventYear{2021}
\EventDate{March 16--19, 2021}
\EventLocation{Saarbr\"{u}cken, Germany (Virtual Conference)}
\EventLogo{}
\SeriesVolume{187}
\ArticleNo{13}
\nolinenumbers

\begin{document}

\maketitle
\vspace{0.5\baselineskip}
\enlargethispage{-0.5\baselineskip}
\begin{abstract}
Lightness and sparsity are two natural parameters for Euclidean $(1+\eps)$-spanners.
Classical results show that, when the dimension $d\in \mathbb{N}$ and $\eps>0$ are constant,
every set $S$ of $n$ points in $d$-space admits an $(1+\eps)$-spanners with $O(n)$ edges and weight proportional to that of the Euclidean MST of $S$. Tight bounds on the dependence on $\eps>0$ for constant $d\in \mathbb{N}$ have been established only recently.
Le and Solomon (FOCS 2019) showed that Steiner points can substantially improve the lightness and sparsity of a $(1+\eps)$-spanner. They gave upper bounds of $\tilde{O}(\eps^{-(d+1)/2})$ for the minimum lightness in dimensions $d\geq 3$, and $\tilde{O}(\eps^{-(d-1))/2})$ for the minimum sparsity in $d$-space for all $d\geq 1$. They  obtained lower bounds only in the plane ($d=2$). Le and Solomon (ESA 2020) also constructed Steiner $(1+\eps)$-spanners of lightness $O(\eps^{-1}\log\Delta)$ in the plane, where $\Delta\in  \Omega(\sqrt{n})$ is the \emph{spread} of $S$, defined as the ratio between the maximum and minimum distance between a pair of points.

In this work, we improve several bounds on the lightness and sparsity of Euclidean Steiner $(1+\eps)$-spanners. Using a new geometric analysis, we establish lower bounds of $\Omega(\eps^{-d/2})$ for the lightness and $\Omega(\eps^{-(d-1)/2})$ for the sparsity of such spanners in Euclidean $d$-space for all $d\geq 2$. We use the geometric insight from our lower bound analysis to construct
Steiner $(1+\eps)$-spanners of lightness $O(\eps^{-1}\log n)$ for $n$ points in Euclidean plane.
\end{abstract}

\section{Introduction}\label{sec:intro}

For an edge-weighted graph $G$, a subgraph $H$ of $G$ is a $t$-\emph{spanner} if $\delta_H(u,v) \le t\cdot \delta_G(u,v)$, where $\delta_G(u,v)$ denotes the shortest path distance between vertices $u$ and $v$.  A subgraph $H$ of $G$ is a \emph{$t$-spanner}, for some $t\ge 1$, if for every $pq\in \binom{V(G)}{2}$, we have $d_G(p,q)\leq t\cdot w(pq)$. The parameter $t$ is called the \emph{stretch factor} of the spanner.
Spanners are fundamental graph structures with many applications in the area of
distributed systems and communication, distributed queuing protocol, compact routing schemes, etc.; see~\cite{demmer1998arrow,
herlihy2001competitive,
PelegU89a,
peleg1989trade}.
Two important parameters of a spanner $H$ are \emph{lightness} and \emph{sparsity}.
The \emph{lightness} of $H$ is the ratio $w(H)/w(\MST)$ between the total weight of $H$ and the weight of a minimum spanning tree (MST). The \emph{sparsity} of $H$ is the ratio $|E(H)|/|E(\MST)|\approx |E(H)|/|V(G)|$ between the number of edges of $H$ and an MST. As $H$ is connected, the trivial lower bound for both the lightness and the sparsity of a spanner is 1.
When the vertices of $G$ are points in a metric space, the edge weights obey the triangle inequality. The most important examples include Euclidean $d$-space and, in general, metric spaces with constant doubling dimensions (the doubling dimension of $\mathbb{R}^d$ is $d$).

In a \emph{geometric spanner}, the underlying graph $G=(S,\binom{S}{2})$ is the complete graph on a finite point set $S$ in $\mathbb{R}^d$, and the edge weights are the Euclidean distances between vertices.
Euclidean spanners are one of the fundamental geometric structures that find application across domains, such as, topology control in wireless networks~\cite{schindelhauer2007geometric}, efficient regression in metric spaces~\cite{gottlieb2017efficient},
approximate distance oracles~\cite{gudmundsson2008approximate}, and others. Rao and Smith~\cite{rao1998approximating} showed the relevance of Euclidean spanners in the context of other geometric \textsf{NP}-hard problems, e.g., Euclidean traveling salesman problem and Euclidean minimum Steiner tree problem, and introduced the so called \emph{banyans}\footnote{A
$(1+\eps)$-banyan for a set of points $A$ is a set of points $A'$
and line segments $S$ with
endpoints in $A\cup A'$ such that a
$1+\eps$ optimal Steiner Minimum Tree for any subset of $A$ is
contained in $S$}, which is a generalization of graph spanners.
Apart from lightness and sparsity, various other optimization criteria have been considered, e.g.,
bounded-degree spanners~\cite{bose2005constructing} and $\alpha$-diamond spanners~\cite{das1989triangulations}. Several distinct construction approaches have been developed for Euclidean spanners, that each found further applications in geometric optimization, such as
well-separated pair decomposition (WSPD) based spanners~\cite{callahan1993optimal, GudmundssonLN02}, skip-list spanners~\cite{arya1994randomized},
path-greedy and gap-greedy spanners~\cite{althofer1993sparse, arya1997efficient}, and more. For an excellent survey of  results and techniques on Euclidean spanners up to 2007, we refer to the book by Narasimhan and Smid~\cite{narasimhan2007geometric}.

\subparagraph{Sparsity.}
A large body of research on spanners has been devoted to \emph{sparse spanners} where the objective is to obtain a spanner with small number edges, preferably $O(|S|)$, with $1+\eps$ stretch factor, for any given $\eps>0$.
Chew~\cite{Chew86} was the first to show that there exists a Euclidean spanner with a linear number of edges and \emph{stretch factor} $\sqrt{10}$. The stretch factor was later improved to $2$~\cite{Chew89}. Clarkson~\cite{Clarkson87} designed the first Euclidean $(1+\eps)$-spanner, for arbitrary small $\eps>0$; an alternative algorithm was  presented by Keil~\cite{keil1988approximating}.
Later, Keil and Gutwin~\cite{keil1992classes} showed that the Delaunay triangulation of the point set $S$ is a $2.42$-spanner. Moreover, these papers introduced the fixed-angle $\Theta$-graph\footnote{The $\Theta$-graph is a type of geometric spanner similar to Yao graph~\cite{yao1982constructing}, where the space around each point $p\in P$ is partitioned into cones of angle $\Theta$, and $S$ will be connected to a point $q\in P$ whose orthogonal projection to some fixed ray contained in the cone is closest to $S$.} as a potential new tool for designing spanners in $\mathbb{R}^2$, which was later generalized to higher dimension by Ruppert and Seidel~\cite{ruppert1991approximating}. One can construct an $(1+\eps)$-spanner with $O(n\eps^{-d+1})$ edges by taking the angle $\Theta$ to be proportional to $\eps$ in any constant dimension $d\geq 1$. A fundamental question in this area is whether the trade-off between the stretch factor $1+\eps$ and the sparsity $O(n\eps^{-d+1})$ is tight.

\subparagraph{Lightness.}
For a set of points $S$ in a metric space, the  lightness is the ratio of the spanner weight (i.e., the sum of all edge weights) to the weight of the minimum spanning tree $\MST(S)$.
Das et al.~\cite{das1993optimally} showed that \emph{greedy-spanner} (\cite{althofer1993sparse}) has constant lightness in $\mathbb{R}^3$. This was generalized later to $\mathbb{R}^d$, for all $d\in \mathbb{N}$, by Das et al.~\cite{narasimhan1995new}. However the dependencies on $\eps$ and $d$ has not been addressed. Rao and Smith showed that the greedy spanner has lightness $\eps^{-O(d)}$ in $\mathbb{R}^d$ for every constant $d$, and asked what is the best possible constant in the exponent. A complete proof for $(1+\eps)$-spanner with lightness $O(\eps^{-2d})$ is in the book on geometric spanners~\cite{narasimhan2007geometric}. Recently, Borradaile et al.~\cite{borradaile2019greedy} showed that the greedy $(1+\eps)$-spanner of a finite metric space of doubling dimension $d$ has lightness $\eps^{-O(d)}$.

\subparagraph{Dependence on \texorpdfstring{\boldmath $\eps>0$}{epsilon > 0} for constant dimension \texorpdfstring{\boldmath $d$}{d}.}
The dependence of the lightness and sparsity on $\eps>0$ for constant $d\in \mathbb{N}$ has been studied only recently. Le and Solomon~\cite{le2019truly} constructed, for every $\eps>0$ and constant $d\in \mathbb{N}$, a set $S$ of $n$ points in $\mathbb{R}^d$ for which any $(1+\eps)$-spanner must have lightness $\Omega(\eps^{-d})$ and sparsity $\Omega(\eps^{-d+1})$, whenever $\eps = \Omega(n^{-1/(d-1)})$. Moreover, they showed that the greedy $(1+\eps)$-spanner in $\mathbb{R}^d$ has lightness $O(\eps^{-d}\log \eps^{-1})$.

Steiner points are additional vertices in a network (via points) that are not part of the input, and a $t$-spanner must achieve stretch factor $t$ only between pairs of the input points in $S$. A classical problem on Steiner points arises in the context of minimum spanning trees. The \emph{Steiner ratio} is the supremum ratio between the weight of a \emph{minimum Steiner tree} and a \emph{minimum spanning tree} of a finite point set, and it is at least $\frac{1}{2}$ in any metric space due to triangle inequality.

Le and Solomon~\cite{le2019truly} noticed that Steiner points can substantially improve the bound on the lightness and sparsity of an $(1+\eps)$-spanner. Previously, Elkin and Solomon~\cite{elkin2015steiner} and Solomon~\cite{Solomon15} showed that Steiner points can improve the weight of the network in the single-source setting. In particular, the so-called \emph{shallow-light trees} (\textsf{SLT}) is a single-source spanning tree that concurrently approximates a shortest-path tree (between the source and all other points) and a minimum spanning tree (for the total weight). They proved that Steiner points help to obtain exponential improvement on the lightness \textsf{SLT}s in a general metric space~\cite{elkin2015steiner}, and quadratic improvement on the lightness in Euclidean spaces~\cite{Solomon15}.

Le and Solomon, used Steiner points to improve the bounds for lightness and sparsity of Euclidean spanners. For minimum sparsity, they gave an upper bound of $O(\eps^{(1-d)/2})$ for $d$-space and a lower bound of $\Omega(\eps^{-1/2}/\log\eps^{-1})$ in the plane ($d=2$)~\cite{le2019truly}.
For minimum lightness, Le and Solomon~\cite{le2020light} gave an upper bound of $O(\eps^{-1}\log\Delta)$ in the plane and $O(\eps^{-(d+1)/2}\log\Delta)$ in dimension $d\geq 3$, where $\Delta$ is the \emph{spread} of the point set, defined as the ratio between the maximum and minimum distance between a pair of points. Note that in any space with doubling dimension $d$ (including $\mathbb{R}^d$), we have $\log \Delta\geq \Omega(\log_d n)$, but the spread $\Delta$ is in fact unbounded.
Very recently, Le and Solomon~\cite{le2020unified} constructed Steiner $(1+\eps)$-spanners with lightness $\tilde{O}(\eps^{-(d+1)/2})$ in dimensions $d\geq 3$.

\subparagraph{Our Contributions.}
In this work, we improve the bounds on the lightness and sparsity of Euclidean Steiner $(1+\epsilon)$-spanners. First, in Section~\ref{sec:lower}, we prove the following lower bounds.

\begin{restatable}{theorem}{lowerboundth}
\label{thm:lb}
Let a positive integers $d$ and real $\varepsilon>0$ be given such that $\eps \leq 1/d$.
Then there exists a set $S$ of $n$ points in $\mathbb{R}^d$
such that any Euclidean Steiner $(1+\eps)$-spanner for $S$ has
lightness $\Omega(\eps^{-d/2})$ and sparsity $\Omega(\eps^{(1-d)/2})$.	
\end{restatable}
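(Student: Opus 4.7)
The plan is to construct an explicit point set $S$ and prove that every Steiner $(1+\eps)$-spanner $H$ of $S$ must have total weight $\Omega(\eps^{-(d-1)})$ and at least $\Omega(\eps^{-(d-1)})$ edges, via a double charging argument on direction-and-offset cells. Concretely, take $S$ to be a maximal $\sqrt{\eps}$-separated subset of the unit $(d-1)$-sphere in $\mathbb{R}^d$. A standard volume/packing argument gives $n = |S| = \Theta(\eps^{-(d-1)/2})$, while the minimum spanning tree has weight $w(\MST(S)) = \Theta(\eps^{(2-d)/2})$ and $n-1$ edges.

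The analytic core is a near-parallel lemma. Fix $p,q \in S$ and let $\pi_{pq}$ be a shortest $p$-to-$q$ path in $H$. Writing $\theta_e \in [0,\pi]$ for the angle between an edge $e \in \pi_{pq}$ and the direction $\overrightarrow{pq}$, the forward-progress inequality $\sum_{e}|e|\cos\theta_e \ge |pq|$ combined with the stretch bound $\sum_{e}|e| \le (1+\eps)|pq|$ yields
\[
\sum_{e \in \pi_{pq}} |e|\bigl(1 - \cos\theta_e\bigr) \ \le\ \eps\,|pq|.
\]
Since $1 - \cos\theta \ge \theta^2/\pi$ on $[0,\pi]$, the \emph{good} sub-collection of edges of $\pi_{pq}$---those making angle at most $c\sqrt{\eps}$ with $\overrightarrow{pq}$, for a suitable constant $c$---has total length at least $3|pq|/4$. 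Moreover, by an ellipse/triangle-inequality argument, every point of $\pi_{pq}$ lies within distance $O(\sqrt{\eps}\,|pq|)$ of the segment $\overline{pq}$, so the good edges are confined to a tube $T_{pq}$ of radius $O(\sqrt{\eps}\,|pq|)$ about $\overline{pq}$.

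The main counting step is a double charging over direction-and-offset cells. Take a $\sqrt{\eps}$-separated net of directions $v$ on the unit sphere of $\mathbb{R}^d$, and for each $v$ a $C\sqrt{\eps}$-separated net of perpendicular offsets $p \in v^\perp \cap B(0,1/2)$; this yields $\Theta(\eps^{-(d-1)/2}) \cdot \Theta(\eps^{-(d-1)/2}) = \Theta(\eps^{-(d-1)})$ cells. For each cell the chord of the unit sphere with midpoint $p$ and direction $v$ has both endpoints within distance $\sqrt{\eps}$ of $S$, so we may pick a pair $(p_i,q_i) \in S \times S$ approximating this chord, with $|p_iq_i| = \Theta(1)$. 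A spanner edge $e$ lies in the tube $T_{p_iq_i}$ only if its direction matches $v$ to within $O(\sqrt{\eps})$ and its midpoint's perpendicular projection onto $v^\perp$ matches $p$ to within $O(\sqrt{\eps})$; together this pins $(v,p)$ to $O(1)$ cells. Hence each edge $e$ is charged to $O(1)$ pairs, and summing the near-parallel lemma over all $\Theta(\eps^{-(d-1)})$ selected pairs gives
\[
w(H) \ \ge\ \Omega\!\left(\sum_{\text{cells}} |p_iq_i|\right) \ =\ \Omega(\eps^{-(d-1)}), \qquad |E(H)| \ \ge\ \Omega(\eps^{-(d-1)}),
\]
where the sparsity bound uses additionally that each tube must contain at least one spanner edge. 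Dividing by $w(\MST(S))$ and $|E(\MST(S))|$ yields the claimed $\Omega(\eps^{-d/2})$ lightness and $\Omega(\eps^{-(d-1)/2})$ sparsity.

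The hardest part will be the tube-overlap analysis, namely establishing rigorously that each spanner edge is confined to $O(1)$ direction-and-offset cells. The angular half of this argument is implicit in Le and Solomon's earlier work, but the perpendicular-offset refinement is new---it is precisely this second coordinate that upgrades the naive $\Omega(\eps^{-(d-1)/2})$ weight bound to $\Omega(\eps^{-(d-1)})$ and hence yields the sharp $\Omega(\eps^{-d/2})$ lightness. Verifying that the perpendicular component of $\pi_{pq}$ is genuinely confined to the tube, and that tubes at nearby offsets carry essentially disjoint good-edge length, is where the new geometric analysis advertised in the introduction has to do its work.
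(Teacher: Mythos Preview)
Your plan is correct and would yield the stated bounds, but the paper reaches the same conclusion by a shorter route, and the economy comes entirely from the choice of point set.

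The paper takes $S$ to be two copies of a $(d-1)$-dimensional grid with spacing $\Theta(\sqrt{\eps})$, placed in opposite faces of the unit cube, and works with all $\Theta(\eps^{1-d})$ pairs in $S_0\times S_1$. The near-parallel lemma is the same as yours (it is the paper's Lemma~\ref{lem:parallel}, applied with $i=2$), giving $\|E(ab)\|\ge\tfrac12$ for every pair. The crucial simplification is in the overlap step: because the grid is discrete, any two segments $a_1b_1$ and $a_2b_2$ between the faces are either \emph{exactly} parallel or make angle at least $4\sqrt{\eps}$. In the first case the enclosing cylinders are disjoint (the grid spacing separates them); in the second the $2\sqrt{\eps}$-cones of near-parallel directions are disjoint. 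Either way $E(a_1b_1)\cap E(a_2b_2)=\varnothing$, so the good-edge sets are literally pairwise disjoint---no $O(1)$ overlap, no charging scheme. Summing gives $\|N\|\ge\Omega(\eps^{1-d})$ immediately; the sparsity bound follows by clipping $N$ to the cube so that every edge has length at most $\sqrt d$ and dividing.

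Your spherical-net construction has no parallel chords, so you are forced into the full direction-and-offset cell decomposition to control overlap. That argument is right, and the framework is arguably the more robust abstraction (it would adapt to other well-spread point sets without a lattice structure), but it is precisely the ``hardest part'' you flag at the end---and the paper's two-grid dichotomy dispatches it in two sentences. If you want the shortest proof of the theorem as stated, switch to the parallel-grid point set; if you want a template that might generalize beyond this specific lower bound, your cell decomposition is the more portable tool.
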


For lightness in dimension $d=2$, this improves the earlier bound of $\Omega(\eps^{-1}\log^{-1}(\eps^{-1}))$ by Le and Solomon~\cite{le2019truly} by a logarithmic factor; and it is the first lower bound in dimensions $d\geq 3$. The point set $S$ in Theorem~\ref{thm:lb} is fairly simple, it consists of two square grids in two parallel hyperplanes in $\mathbb{R}^d$. However, our lower-bound analysis is significantly simpler than that of~\cite{le2019truly}. In particular, our analysis does not depend on planarity, and it generalizes to higher dimensions. The key new insight pertains to a geometric property of Steiner $(1+\eps)$-spanners: If the length of an $ab$-path $S$ between points $a,b\in\mathbb{R}^d$ is at most $(1+\eps)\|ab\|$, then ``most'' of the edges of $S$ are almost parallel to $ab$. We expand on this idea in Section~\ref{sec:pre}.

Then, in Section~\ref{sec:upper} we prove the following theorem on light spanners.

\begin{restatable}{theorem}{upperboundtheorem}
\label{thm:upper}
For every set $S$ of $n$ points in Euclidean plane,
there exists a Steiner $(1+\eps)$-spanner of
lightness $O(\eps^{-1}\log n)$.
\end{restatable}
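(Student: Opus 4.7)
The plan is to adapt the Le--Solomon construction~\cite{le2020light} that yields lightness $O(\eps^{-1}\log\Delta)$ in the plane via a quadtree of depth $O(\log\Delta)$. In their scheme each quadtree level contributes weight $O(\eps^{-1})\cdot w(\MST)$, and the $\log\Delta$ factor comes purely from the number of levels. The key change is to replace the bounding-box quadtree by a hierarchical decomposition of depth $O(\log n)$ that is driven by the combinatorial structure of $S$ rather than by its spread.

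Concretely, I first compute the Euclidean MST $T$ of $S$ and then construct a balanced hierarchy via a tree-centroid decomposition of~$T$: recursively remove a centroid vertex whose removal partitions $T$ into subtrees each containing at most half of the remaining vertices. The resulting hierarchy has depth $O(\log n)$, and each edge of $T$ lies in exactly one cluster at every level, so the sum of per-cluster MST weights at any single level equals $w(T)$. I assign each pair $(u,v)\in\binom{S}{2}$ to the unique highest-level cluster $C$ whose centroid lies on the MST path between $u$ and $v$, so that a single per-cluster structure $H_C$ must deliver $(1+\eps)$-stretch for this pair and the $\eps$-budget never compounds across levels.

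For each cluster $C$ I add a Steiner subgraph $H_C$ of weight $O(\eps^{-1})\cdot w(T|_C)$ designed to provide $(1+\eps)$-stretch paths for exactly those pairs assigned to $C$. Here I exploit the geometric insight of Section~\ref{sec:pre}: on a $(1+\eps)$-spanner path from $a$ to $b$, the bulk of the total length must lie within $O(\sqrt{\eps})$ radians of the direction $\overrightarrow{ab}$. This reduces $H_C$ to a system of ``Steiner highways'' in $O(\eps^{-1/2})$ canonical directions running along the planar embedding of $T|_C$, together with an $\eps$-resolution set of portal/ramp Steiner points so that any input point of $C$ can join the nearest highway with $\eps$-fraction overhead. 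Summing $w(H_C)=O(\eps^{-1})\cdot w(T|_C)$ over all clusters at all $O(\log n)$ levels then yields total weight $O(\eps^{-1}\log n)\cdot w(T)$, as claimed.

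The main obstacle is the per-cluster weight bound: the highway system must be charged against the MST weight \emph{inside} $C$ rather than against $\diam(C)$, since these two quantities can differ by a factor of $n$ (e.g.\ when $T|_C$ is nearly a path). This forces the highways to snake along the spine of $T|_C$, a one-dimensional object of length $w(T|_C)$, and requires a bend-sharing argument, using planarity, that bounds the total length per direction by $O(w(T|_C))$ rather than by $O(\diam(C))$. A secondary difficulty is handling pairs whose best spanner path momentarily exits $C$; I would absorb these into the same asymptotic budget by attaching an extra portal layer at each centroid, whose $O(\eps^{-1})\cdot w(T|_C)$ weight is of the same order and does not affect the final bound.
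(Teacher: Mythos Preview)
Your plan diverges from the paper's proof and, as written, has a genuine gap at its core step.

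The paper does \emph{not} use any hierarchical clustering. It partitions the space of directions $[0,\pi)$ into $\Theta(\eps^{-1/2})$ intervals of width $\Theta(\sqrt{\eps})$ and, for each interval $D$, builds a single \emph{directional} $(1+\eps)$-spanner of weight $O(\eps^{-1/2}\log n)\cdot\|\MST(S)\|$ (Lemma~\ref{lem:dir}); the union over all directions gives the result. The $\log n$ factor does not come from a recursion depth but from a one-shot rectangulation: the rectilinear MST together with the bounding box cuts the plane into rectilinear faces, and a theorem of de~Berg and van~Kreveld rectangulates each face with weight $O(\per\cdot\log n)$. Inside each resulting thin rectangle the problem is essentially single-source, and shallow-light trees (Lemmas~\ref{lem:rect1}--\ref{lem:rect2}) finish it off.

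Your scheme, by contrast, pushes the entire difficulty into the per-cluster gadget $H_C$. You require $H_C$ to have weight $O(\eps^{-1})\cdot w(T|_C)$ while providing $(1+\eps)$-stretch for every pair $(u,v)$ assigned to $C$. But the pairs assigned to $C$ are exactly all pairs with $u$ and $v$ in different components of $T|_C\setminus\{c\}$; geometrically this is an arbitrary bichromatic pair set on an arbitrary planar configuration whose MST weighs $w(T|_C)$. Achieving $(1+\eps)$-stretch for all such pairs with weight $O(\eps^{-1})\cdot w(T|_C)$ is precisely the statement that every planar point set admits a Steiner $(1+\eps)$-spanner of lightness $O(\eps^{-1})$---a strictly \emph{stronger} claim than the theorem you are proving, and one established only later in~\cite{DBLP:journals/corr/abs-2012-02216}. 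The centroid $c$ buys you nothing: the Euclidean segment $uv$ need not pass anywhere near $c$, so the per-cluster problem is not single-source and no SLT argument applies. Concretely, take $T|_C$ to be a long thin U-shaped path with $u,v$ at the two tips; then $\|uv\|$ is tiny, the MST path through $c$ is long, and any ``highway that snakes along the spine of $T|_C$'' misses the short segment $uv$ entirely, so no $\eps$-resolution portal layer can repair a $\Theta(1)$ detour. Your ``bend-sharing argument, using planarity'' is exactly the missing idea; the paper's substitute for it is the rectangulation, which turns the charging against $\|\MST\|$ into a sum of thin-rectangle problems where single-source tools do work.
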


\enlargethispage{1.5\baselineskip}
This result improves on an earlier bound of $O(\eps^{-1}\log \Delta)$ by Le and Solomon~\cite{le2020light}, where $\Delta$ is the \emph{spread} of the point set, defined as the ratio between the maximum and minimum distance between a pair of points. Note that $\Delta\geq \Omega(\log n)$ in every metric space of constant doubling dimension.
Recently, Le and Solomon~\cite{le2020light2} noted in the revised version of their paper that the $\log \Delta$ factor can be reduced to a $\log n$ factor by a general discretization technique (see, e.g., Chan et al.~\cite{DBLP:journals/siamcomp/ChanLNS15}). Very recently, Bhore and T\'{o}th~\cite{DBLP:journals/corr/abs-2012-02216} achieved the optimal dependence on $\eps$ and showed that, for every finite points set $S\subset \mathbb{R}^2$ and $\eps>0$, there exists a Euclidean Steiner $(1+\eps)$-spanner of weight $O(\frac{1}{\eps}\,\|\MST(S)\|)$. The spanner construction in~\cite{DBLP:journals/corr/abs-2012-02216} is a far-reaching generalization of the methods we develop in the proof of Theorem~\ref{thm:upper}. In particular, both papers use \emph{directional spanners}, introduced in~Section~\ref{sec:upper} of this paper, as a key ingredient and construct a Euclidean Steiner $(1+\eps)$-spanner as a union of $O(\eps^{-1/2})$ directional spanners.

\section{Preliminaries}\label{sec:pre}

Let $d\geq 2$ be an integer, and $S$ a set of $n$ points in $\mathbb{R}^d$. For $a,b\in \mathbb{R}^d$, the Euclidean distance between $a$ and $b$ is denoted by $\|ab\|$. For a set $E$ of line segments in $\mathbb{R}^d$, let $\|E\|=\sum_{e\in E}\|e\|$ be the total weight of all segments in $E$. For a geometric graph $G=(S,E)$, where $S\subset \mathbb{R}^d$, we also use the notation $\|G\|=\|E\|$, which is the Euclidean weight of graph $G$.

We briefly review a few geometric primitives in $d$-space.
For $a,b\in \mathbb{R}^d$, the locus of points $c\in \mathbb{R}^d$ with $\|ac\|+\|cb\|\leq (1+\eps)\|ab\|$ is an ellipsoid $\mathcal{E}_{ab}$ with foci $a$ and $b$, and major axis of length $(1+\eps)\|ab\|$; see Fig.~\ref{fig:ellipse}(a).
Note that all $d-1$ minor axes of $\mathcal{E}_{ab}$ are $\sqrt{(1+\eps)^2-1}\|ab\|=\sqrt{2\eps+\eps^2}\|ab\|<\sqrt{3\eps}\|ab\|$ when $\eps<1$.
In particular, the aspect ratio of the minimum bounding box of $\mathcal{E}_{ab}$ is roughly  $\sqrt{\eps}$. By the triangle inequality, $\mathcal{E}_{ab}$ contains every  $ab$-path of weight at most $(1+\eps)\|ab\|$.

The unit vectors in $\mathbb{R}^d$ are on the $(d-1)$-sphere $\mathbb{S}^{d-1}$; the direction vectors of a line in $\mathbb{R}^d$ can be represented by vectors of a hemisphere. The \emph{angle} between two unit vectors, $\overrightarrow{u}_1$ and $\overrightarrow{u}_2$ is
$\angle(\overrightarrow{u}_1,\overrightarrow{u}_2)=\arccos(\overrightarrow{u}_1\cdot \overrightarrow{u}_2)\in (-\pi,\pi)$. Between two (undirected) edges $e_1$ and $e_2$ with unit direction vectors $\pm\overrightarrow{u}_1$ and $\pm\overrightarrow{u}_2$, we define the angle as $\angle (e_1,e_2)=\arccos | \overrightarrow{u}_1\cdot \overrightarrow{u}_2|\in [0,\pi)$. Let $\proj_{ab}(e)$ denote the orthogonal projection of an edge $e$ to the supporting line of $ab$, see Fig.~\ref{fig:ellipse}(b); and note that $\|\proj_{ab}(e)\|=\|e\|\cos(ab,e)$.

\begin{figure}[htbp]
\centering
\includegraphics[width=\textwidth]{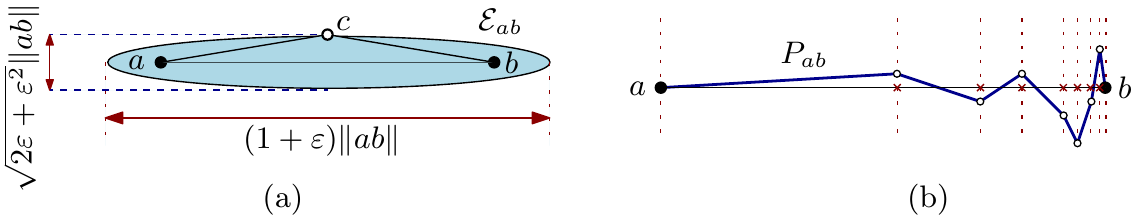}
\caption{(a) An ellipse $\mathcal{E}_{ab}$ with foci $a$ and $b$, and major axis $(1+\eps)\|ab\|$.
(b) A monotone $ab$-path $P_{ab}$, and the projections of its edges to $ab$.}
\label{fig:ellipse}
\end{figure}

\subparagraph{Characterization for Short \texorpdfstring{\boldmath $ab$}{ab}-Paths.}
Let $a,b\in \mathbb{R}^d$, and let $P_{ab}$ be a polygonal $ab$-path of weight at most $(1+\eps)\|ab\|$. We show that ``most'' edges along $P_{ab}$ must be ``nearly'' parallel to $ab$. Specifically, for an angle $\alpha\in [0,\pi/2)$, we distinguish between two types of edges in $P_{ab}$. Denote by $E(\alpha)$ the set of edges $e$ in $P_{ab}$ with $\angle(ab,e)\leq\alpha$; and let $F(\alpha)$ be the set of all other edges of $P_{ab}$.
Clearly, we have $\|P_{ab}\|=\|E(\alpha)\|+\|F(\alpha)\|$ for all $\alpha$.

\begin{lemma}\label{lem:parallel}
Let $a,b\in \mathbb{R}^d$ and let $P_{ab}$ be an $ab$-path of weight $\|P_{ab}\|\leq (1+\eps)\|ab\|$. Then for every $i\in \{1,\ldots, \lfloor1/\sqrt{\eps}\rfloor\}$,
we have $\|E(i\cdot \sqrt{\eps})\|\geq (1-2/i^2)\,\|ab\|$.
\end{lemma}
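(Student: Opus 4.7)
The plan is to use an orthogonal-projection argument onto the supporting line of $ab$. Orient each edge of $P_{ab}$ along the path from $a$ to $b$; then the vector sum of the oriented edges equals $b-a$, so taking the inner product with the unit vector $\overrightarrow{ab}/\|ab\|$ shows that the signed projections of the oriented edges sum exactly to $\|ab\|$. Since each signed projection is bounded above by the unsigned projection $\|\proj_{ab}(e)\|=\|e\|\cos\angle(ab,e)$, this yields the key inequality
\[
\|ab\|\;\le\;\sum_{e\in P_{ab}}\|e\|\cos\angle(ab,e).
\]

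Setting $\alpha:=i\sqrt{\eps}$, I would next split the right-hand sum according to the partition $P_{ab}=E(\alpha)\cup F(\alpha)$, bounding $\cos\angle(ab,e)\le 1$ for edges $e\in E(\alpha)$ and $\cos\angle(ab,e)\le\cos\alpha$ for edges $e\in F(\alpha)$. This gives $\|ab\|\le\|E(\alpha)\|+\cos\alpha\cdot\|F(\alpha)\|$, and substituting the bound $\|F(\alpha)\|\le(1+\eps)\|ab\|-\|E(\alpha)\|$ provided by the path-length hypothesis yields, after a short rearrangement,
\[
\|E(\alpha)\|\;\ge\;\|ab\|\left(1-\frac{\eps\cos\alpha}{1-\cos\alpha}\right).
\]

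What remains is the purely trigonometric claim $\eps\cos\alpha/(1-\cos\alpha)\le 2/i^2$ for $\alpha=i\sqrt{\eps}\in(0,1]$, and this is the main technical step. Using $i^2\eps=\alpha^2$ and the identity $1-\cos\alpha=2\sin^2(\alpha/2)$, I would reduce the claim first to $\alpha^2\le(4+2\alpha^2)\sin^2(\alpha/2)$, and then, via the Taylor lower bound $\sin(\alpha/2)\ge(\alpha/2)(1-\alpha^2/24)$ combined with $(1-\alpha^2/24)^2\ge 1-\alpha^2/12$, to the polynomial inequality $4\le(4+2\alpha^2)(1-\alpha^2/12)=4+5\alpha^2/3-\alpha^4/6$, which is immediate for $\alpha^2\le 1$. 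I expect the constant $2$ in $2/i^2$ to be essentially tight for this projection-based approach, so the non-trivial work lies in pushing the elementary trigonometry to this exact form rather than in the geometric reduction itself.
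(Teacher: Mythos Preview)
Your proof is correct and follows essentially the same route as the paper: both project onto the line through $a$ and $b$, split the edges into $E(\alpha)$ and $F(\alpha)$, and exploit that edges in $F(\alpha)$ lose at least a factor $\cos\alpha$ under projection. The paper organizes this as a proof by contradiction and, in place of your trigonometric chain, uses the single Taylor estimate $\sec x \ge 1 + x^2/2$, which is equivalent to $\cos\alpha/(1-\cos\alpha)\le 2/\alpha^2$ and hence yields your target inequality $\eps\cos\alpha/(1-\cos\alpha)\le 2/i^2$ in one line---so the step you flag as ``the non-trivial work'' collapses considerably.
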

\begin{proof}
Suppose, to the contrary, that $\|E(i\cdot \sqrt{\eps})\|<(1-2/i^2)\,\|ab\|$ for an $i\in \{1,\ldots, \lfloor1/\sqrt{\eps}\rfloor\}$. We have
\begin{equation}\label{eq:proj}
\sum_{e\in E(i\,\sqrt{\eps})\cup F(i\,\sqrt{\eps})}  \| \text{proj}_{ab}(e)\|\geq \|ab\|,
\end{equation}
which implies
\begin{align}\label{eq:proj2}
\sum_{e\in F(i\,\sqrt{\eps})} \| \text{proj}_{ab}(e)\|
&\geq  \|ab\|-\sum_{e\in E(i\,\sqrt{\eps})} \| \text{proj}_{ab}(e)\| \\
&\geq  \|ab\|-\sum_{e\in E(i\,\sqrt{\eps})} \| e\| \nonumber\\
&= \|ab\|-\|E(i\,\sqrt{\eps})\|. \nonumber
\end{align}
Recall that for every edge $e\in F(i\,\sqrt{\eps})$, we have $\angle(e,ab)\geq i\cdot \sqrt{\eps}$. Using the Taylor estimate $\frac{1}{\cos(x)}\geq 1+\frac{x^2}{2}$, for every $e\in F(i\,\sqrt{\eps})$, we obtain
\[|e\|\geq \frac{\|\text{proj}_{ab}(e)\|}{\cos(i\cdot \sqrt{\eps})}
\geq \|\text{proj}_{ab}(e)\|\left(1+\frac{(i\,\sqrt{\eps})^2}{2}\right)
= \|\text{proj}_{ab}(e)\|\left(1+\frac{i^2\,\eps}{2}\right),\]
Combined with \eqref{eq:proj2}, this yields
\begin{align*}
\|P_{ab}\|
&= \sum_{e\in E(i\,\sqrt{\eps})} \|e\| +  \sum_{e\in F(i\,\sqrt{\eps})} \|e\| \\
&\geq \sum_{e\in E(i\,\sqrt{\eps})} \|e\| +  \sum_{e\in F(i\,\sqrt{\eps})} \|\text{proj}_{ab}(e)\|\left(1+\frac{i^2\,\eps}{2}\right) \\
&\geq \|E(i\,\sqrt{\eps})\| +\big(\|ab\|-\|E(i\,\sqrt{\eps})\|\big) \left(1+\frac{i^2\,\eps}{2}\right)\\
&= \left(1+\frac{i^2\,\eps}{2}\right) \|ab\|
- \frac{i^2\,\eps}{2}\, \|E(i\,\sqrt{\eps})\|\\
&> \left(1+\frac{i^2\,\eps}{2}\right)  \|ab\|
-\frac{i^2\,\eps}{2} \left(1-\frac{2}{i^2}\right)\,\|ab\|\\
&\geq \left(1+\frac{i^2\,\eps}{2}\right) \|ab\| -\left(\frac{i^2}{2}-1\right)\eps\,\|ab\| \\
&= (1+\eps) \|ab\|,
\end{align*}
which is a contradiction.
\end{proof}

We use Lemma~\ref{lem:parallel} in the analysis of our lower bound construction in Section~\ref{thm:lb}.
We can also derive a converse of Lemma~\ref{lem:parallel} for monotone $ab$-paths. An  $ab$-path is \emph{monotone} if $\angle(\overrightarrow{ab},\overrightarrow{e})>0$ for every directed edge $\overrightarrow{e}$ of $P_{ab}$, where the path is directed from $a$ to $b$. Equivalently, an $ab$-path is monotone if it crosses every hyperplane orthogonal to $ab$ at most once.
We show that if the angle $\angle(\overrightarrow{ab},\overrightarrow{d})$ is sufficiently small for ``most'' of the edges of $P_{ab}$, then $\|P_{ab}\|\leq (1+\eps)\|ab\|$.

\begin{lemma}\label{lem:parallel+}
For every $\delta>0$, there is a $\kappa>0$ with the following property.
For $a,b\in \mathbb{R}^d$ and a monotone an $ab$-path $P_{ab}$,
if $\|F(i\,\sqrt{\eps\kappa})\|\leq \|P_{ab}\|/i^{2+\delta}$ for all $i\in\{1,\ldots ,\lceil \pi/\sqrt{\eps\kappa}\rceil\}$, then $\|P_{ab}\|\leq (1+\eps)\|ab\|$.
\end{lemma}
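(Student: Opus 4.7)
The plan is to bound the excess length $\|P_{ab}\|-\|ab\|$ by grouping the edges of $P_{ab}$ according to their angle with $ab$ and then applying summation by parts to the tail bounds supplied by the hypothesis. Since $P_{ab}$ is monotone, the projections $\proj_{ab}(e)$ tile the segment $ab$ exactly once, so $\|ab\|=\sum_{e\in P_{ab}}\|\proj_{ab}(e)\|$; combined with the identity $\|\proj_{ab}(e)\|=\|e\|\cos\theta_e$, where $\theta_e:=\angle(ab,e)\in[0,\pi/2)$, this gives
\[
\|P_{ab}\|-\|ab\|\;=\;\sum_{e\in P_{ab}}\|e\|\,(1-\cos\theta_e).
\]
Setting $\beta_i:=i\sqrt{\eps\kappa}$ and $I:=\lceil\pi/\sqrt{\eps\kappa}\rceil$, I would partition the edges into shells $E_i:=F(\beta_{i-1})\setminus F(\beta_i)$, consisting of edges whose angle with $ab$ lies in $(\beta_{i-1},\beta_i]$.

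The universal inequality $1-\cos x\leq x^2/2$ then bounds the contribution of each $e\in E_i$ by $\|e\|\cdot i^2\eps\kappa/2$, so
\[
\|P_{ab}\|-\|ab\|\;\leq\;\tfrac{\eps\kappa}{2}\sum_{i=1}^{I}i^2\|E_i\|.
\]
To control the weighted sum on the right, I would write $\|E_i\|=f(i-1)-f(i)$ with $f(i):=\|F(\beta_i)\|$ and apply Abel summation, which yields (after dropping the nonpositive boundary term)
\[
\sum_{i=1}^{I}i^2\|E_i\|\;\leq\;f(0)+\sum_{i=1}^{I-1}(2i+1)\,f(i).
\]
Inserting $f(0)\leq\|P_{ab}\|$ together with the hypothesis $f(i)\leq\|P_{ab}\|/i^{2+\delta}$ bounds the right-hand side by $C_\delta\,\|P_{ab}\|$, where $C_\delta:=1+\sum_{i\geq 1}(2i+1)/i^{2+\delta}$ is a finite constant depending only on $\delta$ by the comparison test.

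Combining the two displays gives $\|P_{ab}\|-\|ab\|\leq \tfrac{\eps\kappa C_\delta}{2}\|P_{ab}\|$. Choosing $\kappa:=1/C_\delta$ forces the coefficient to be at most $\eps/2$, and rearranging $(1-\eps/2)\|P_{ab}\|\leq\|ab\|$ yields $\|P_{ab}\|\leq(1+\eps)\|ab\|$ (for $\eps\leq 1$, which we may freely assume). The main obstacle is the Abel-summation step: the hypothesis provides only tail bounds on $\|F(\beta_i)\|$, but the cosine loss introduces a weighting factor $i^2$, so one needs the exponent on the tail bound to be strictly larger than $2$ in order for $\sum (2i+1)/i^{2+\delta}$ to converge. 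This is exactly why the lemma requires the exponent $2+\delta$ with $\delta>0$ rather than simply $2$, and it is what makes the quantitative dependence $\kappa=\kappa(\delta)$ unavoidable.
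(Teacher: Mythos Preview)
Your proof is correct and follows essentially the same approach as the paper's own proof: both exploit monotonicity to write $\|P_{ab}\|-\|ab\|=\sum_e\|e\|(1-\cos\theta_e)$, group edges into angular shells of width $\sqrt{\eps\kappa}$, bound $1-\cos x\le x^2/2$, and then apply Abel summation to convert the shell sums into the tail quantities $\|F(i\sqrt{\eps\kappa})\|$ controlled by the hypothesis. Your choice $\kappa=1/C_\delta$ is in fact slightly cleaner than the paper's stated constant, and your remark that the exponent $2+\delta$ (rather than $2$) is precisely what is needed for convergence of $\sum(2i+1)/i^{2+\delta}$ is exactly the point.
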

\begin{proof}
Let $P_{ab}$ be an $ab$-path with edge set $E$. Note that, by definition, $F(0)=E$.
For angles $0\leq \alpha<\beta\leq \pi/2$, let $E(\alpha,\beta)$ denote the set of edges $e\in E$ with $\alpha\leq \angle (ab,e)<\beta$. For convenience, we put $m=\lceil \pi/\sqrt{\eps\kappa}\rceil$. Using the Taylor estimate $\cos x\geq 1-x^2/2$, we can bound the excess weight of $P_{ab}$ as follows.
\begin{align*}
\|P_{ab}\|-\|ab\|
&= \sum_{e\in E} \|e\| - \sum_{e\in E} \|\proj_{ab}e\|\\
&= \sum_{e\in E} \|e\| (1-\cos \angle (ab,e))\\
&\leq \sum_{i=1}^m \|E((i-1)\sqrt{\eps\kappa},i\,\sqrt{\eps\kappa})\| (1-\cos (i\,\sqrt{\eps\kappa})) \\
&\leq \sum_{i=1}^m \|E((i-1)\sqrt{\eps\kappa},i\,\sqrt{\eps\kappa})\| \cdot \frac{i^2\,\eps\kappa}{2} \\
&\leq \sum_{i=1}^m \left(\|F((i-1)\sqrt{\eps\kappa})\| - \|F(i\,\sqrt{\eps\kappa})\|\right)
\cdot \frac{i^2\,\eps\kappa}{2} \\
&= F(0)\cdot \frac{1^2\eps\kappa}{2}+\sum_{i=1}^m \|F(i\,\sqrt{\eps\kappa})\|
\left(\frac{(i+1)^2\,\eps\kappa}{2}-  \frac{i^2\,\eps\kappa}{2}\right) \\
&\leq \|P_{ab}\|\cdot \frac{\eps\kappa}{2}     +\sum_{i=1}^m \frac{\|P_{ab}\|}{i^{2+\delta}}\cdot \frac{(2i+1)\eps\kappa}{2} \\
&\leq \frac{\eps\kappa}{2}\cdot \|P_{ab}\| \left(1+\sum_{i=1}^\infty \frac{2i+1}{i^{2+\delta}}\right)
\end{align*}
For $\kappa= 2(1+\sum_{i=1}^\infty (2i+1)/2^{2+\delta})^{-1}$, we obtain
\[\|P_{ab}\|-\|ab\|\leq \frac{\eps}{2}\,\|P_{ab}\|,\]
which readily implies $\|P_{ab}\|\leq (1-\eps/2)^{-1}\|ab\|<(1+\eps)\|ab\|$,
as required.
\end{proof}

The criteria in Lemma~\ref{lem:parallel+} can certify that a geometric graph $G$ is a Euclidean Steiner $(1+\eps)$-spanner for a point set $S$. Intuitively, an $(1+\eps)$-spanner should contain, for all point pairs $a,b\in S$, an $ab$-path in which the majority of edges $e$ satisfy $\angle(ab,e)\leq O(\sqrt{\eps})$, with exceptions quantified by Lemma~\ref{lem:parallel+}. This property has already been used by Solomon~\cite{Solomon15} in the single-source setting, for the design of shallow-light trees.
We use shallow-light trees in our upper bound (Section~\ref{sec:upper}), instead of Lemma~\ref{lem:parallel+}. However, the characterization of $ab$-paths of weight at most $(1+\eps)\|ab\|$, presented in this section, may be of independent interest.

\begin{figure}[htbp]
\centering
\includegraphics[width=0.3\textwidth]{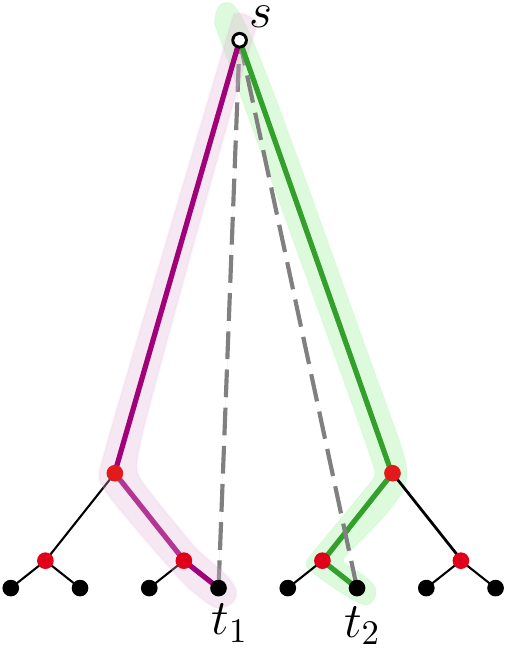}
\caption{An illustration of a shallow-light tree for a source $s$ and a set $S$ of collinear points. The input points and the Steiner points are colored black and red, respectively.}
\label{fig:slt}
\end{figure}

\subparagraph{Shallow-light trees.}
Shallow-light trees (\textsf{SLT}) were introduced by Solomon~\cite{Solomon15}.
Given a source $s$ and a point set $S$ in $\mathbb{R}^d$, an \emph{$(\alpha,\beta)$-SLT} is a Steiner tree rooted at $s$ that contains a path of weight at most $\alpha \,\|ab\|$ between the \emph{source} $s$ and any point $t\in S$, and has weight $\beta\,\|\MST(S)\|$. For our upper bounds in Section~\ref{sec:upper}, we use the following variant of shallow-light trees,
between a source $s$ and a set $S$ of collinear points in the plane; see Figure~\ref{fig:slt}.

\begin{lemma}[Solomon~{\cite[Section~2.1]{Solomon15}}]\label{lem:shallow}
Let $0<\eps<1$, let $S$ be a set of points in the $[-\frac12,\frac12]$ interval of the $x$-axis, and let $s=(0,\eps^{-1/2})$ be a point on the $y$-axis. Then there exists a geometric graph of weight $O(\eps^{-1/2})$ that contains, for every point $t\in S$, an $st$-path $P_{st}$ with $\|P_{st}\|\leq (1+\eps)\,\|st\|$.
\end{lemma}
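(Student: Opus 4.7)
The plan is to build a Steiner tree rooted at $s$ whose edges stay almost parallel to the downward vertical. Set $H:=\eps^{-1/2}$; the key geometric observation is that every direction $\overrightarrow{st}$ for $t\in S\subseteq[-\tfrac12,\tfrac12]$ makes angle at most $\arctan(1/(2H))=O(\sqrt{\eps})$ with the vertical, so the ellipse $\mathcal{E}_{st}$ of Section~\ref{sec:pre} is a narrow vertical strip of width $O(1)$, and any $(1+\eps)$-path from $s$ to $t$ must be ``almost vertical'' by Lemma~\ref{lem:parallel}. In particular, paths to different targets can share a common near-vertical trunk near $s$.

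I would construct the tree hierarchically. Fix horizontal levels $y=h_k$ with $H=h_0>h_1>\cdots>h_L=\Theta(1)$ chosen geometrically; at level $k$, place $N_k$ Steiner points with $x$-coordinates evenly spaced in $[-\tfrac12,\tfrac12]$, with $N_k$ growing so that the bottom level has $N_L=\Theta(\eps^{-1/2})$ Steiner points at spacing $\Theta(\sqrt{\eps})$. Connect each level-$k$ Steiner to its nearest neighbour at level $k-1$, drop a short vertical segment from each level-$L$ Steiner to the $x$-axis, and add the (1D) MST of $S$ on the $x$-axis (of weight at most $1$). For each $t\in S$, the natural root-to-$t$ path $s\to v_1\to\cdots\to v_L\to t$, obtained by taking at each level the Steiner closest to the vertical line through $t$, has every edge within angle $O(\sqrt{\eps})$ of the vertical; a direct computation---or an application of Lemma~\ref{lem:parallel+}---then yields $\|P_{st}\|\leq(1+\eps)\|st\|$.

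For the weight bound, the $N_k$ edges entering level $k$ contribute $N_k(h_{k-1}-h_k)$, and with a geometric choice of $(h_k,N_k)$ these level-by-level totals form a convergent series summing to $O(H)=O(\eps^{-1/2})$, while the $x$-axis MST and the bottom vertical drops contribute only $O(1)$. The main technical obstacle is to choose $L$, $\{h_k\}$, and $\{N_k\}$ so that the near-vertical stretch constraint and the geometric weight bound hold simultaneously, avoiding the $\log(1/\eps)$ overhead that a plain balanced binary tree would introduce; this is achieved in~\cite[Section~2.1]{Solomon15} by suitably non-uniform branching between consecutive levels.
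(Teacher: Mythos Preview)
The paper does not give its own proof of Lemma~\ref{lem:shallow}; the result is quoted directly from Solomon~\cite[Section~2.1]{Solomon15}. Your sketch is a faithful high-level outline of that construction---a hierarchical Steiner tree with near-vertical edges and carefully tuned level heights and branching---and you correctly isolate the one nontrivial step (avoiding the $\log(1/\eps)$ weight overhead via non-uniform branching) and defer it back to~\cite{Solomon15}, just as the paper does.
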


\section{Lower Bounds}\label{sec:lower}

In this section we prove the following lower bound on the lightness of Euclidean Steiner spanners in $\mathbb{R}^d$. Our lower bound construction is a direct generalization of the 2-dimensional lower bound construction by Le and Solomon~\cite{le2019truly}. However, our analysis is significantly simpler than that of~\cite{le2019truly} and it does not depend on planarity. As a result, it easily extends to higher dimensions.

\lowerboundth*

\begin{proof}
First we establish the result for a point set of size $\Theta_d(\eps^{(1-d)/2})$ and then generalize to arbitrary $n$.
Let $Q=[0,1]^d$ be a unit cube in $\mathbb{R}^d$; see Fig.~\ref{fig:twogrids}.
The point set $S$ will consist of two square grids in two opposite faces of $Q$, with $d/\sqrt{\eps}$ spacing. Specifically, consider the lattice $L=(d/\sqrt{\eps})\cdot \mathbb{Z}^d$. Let $Q_0$ and $Q_1$, respectively, be the two faces of $Q$ orthogonal to the $x_d$-axis. Now let $S_0=L\cap Q_0$ and $S_1=L\cap Q_1$. We have $|S_0|=|S_1|=\lfloor 1/(d\sqrt{\eps})\rfloor^{d-1}=\Theta_d(\eps^{(1-d)/2})$, hence $|S|=\Theta_d(\eps^{(1-d)/2})$.

\begin{figure}[htbp]
\centering
\includegraphics[width=0.44\textwidth]{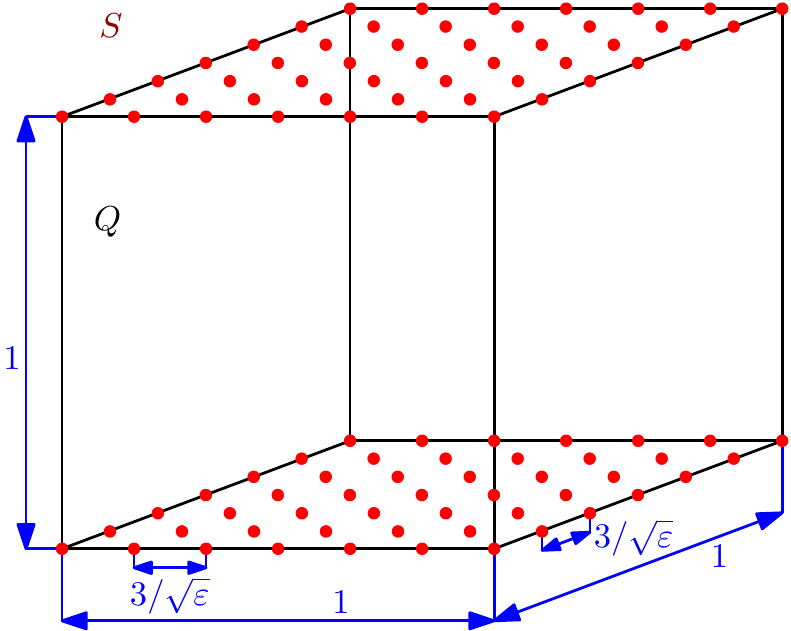}
\caption{A schematic image of $S$ in $\mathbb{R}^3$.}
\label{fig:twogrids}
\end{figure}

\enlargethispage{\baselineskip}
Let $N$ be a Euclidean Steiner $(1+\eps)$-spanner for $S$. For a point pair $(a,b)\in S_0\times S_1$, we have $1\leq \|ab\|\leq \text{diam}(Q)=\sqrt{d}$. The spanner $N$ contains an $ab$-path $P_{ab}$ of weight at most $(1+\eps)\|ab\|$, which lies in the ellipsoid $\mathcal{E}_{ab}$ with foci at $a$ and $b$, and major axis $(1+\eps)\|ab\|$. The ellipsoid $\mathcal{E}_{ab}$ is, in turn, contained in an infinite cylinder $C_{ab}$ with axis $ab$ and radius $\frac{\sqrt{(1+\eps)^2-1^2}}{2}\|ab\|<\sqrt{\eps}\|ab\| \leq  \sqrt{\eps d}$. The intersection of the cylinder $C_{ab}$ with hyperplanes containing $Q_0$ and $Q_1$, resp., is an ellipsoid of half-diameter at most $\sqrt{d-1}\cdot \sqrt{\eps d}<\sqrt{\eps}\,d$, and their centers are  $a$ and $b$, respectively. In particular, all point in $S$, other than $a$ and $b$, are in the exterior of $C_{ab}$.

We distinguish between two types of edges in the $ab$-path $P_{ab}$.
An edge $e$ of $P_{ab}$ is \emph{near-parallel} to $ab$ if $\angle(ab,e)<2\cdot\sqrt{\eps}$.
Let $E(ab)$ be the set of edges of $P_{ab}$ that are near-parallel to $ab$, and $F(ab)$ the set of all other edges of $P_{ab}$. Lemma~\ref{lem:parallel} with $i=2$ yields
\begin{equation}\label{eq:12}
\|E(ab)\|\geq \frac12 \|ab\|\geq \frac12.
\end{equation}

Notice that for two pairs $(a_1,b_1), (a_2,b_2)\in S_0\times S_1$, if $\{a_1,b_1\}\neq \{a_2,b_2\}$, then $E(a_1b_1)\cap E(a_2b_2)=\emptyset$. If $\angle(a_1 b_1,a_2 b_2) \geq 4\sqrt{\eps}$, this follows from the fact that the \emph{directions} near-parallel to $a_1b_1$ and $a_2b_2$, resp., are disjoint. If $\angle(a_ 1b_1,a_2 b_2) < 4\sqrt{\eps}$, then $a_1 b_1$ and $a_2 b_2$ are parallel, consequently the cylinders $C_{a_1 b_1}$ and $C_{a_2 b_2}$ have disjoint interiors, and so $E(a_1b_1)\cap E(a_2b_2)=\emptyset$.
Combined with \eqref{eq:12}, this yields
\begin{equation}\label{eq:weight}
\|N\|
\geq \sum_{(a,b)\in S_0\times S_1} \|E(ab)\|
\geq |S_0|\cdot |S_1|\cdot \frac12
\geq \Theta_d( \eps^{1-d}).
\end{equation}

Similarly to~\cite[Claim~5.3]{le2019truly}, we may assume that $N\subseteq Q$ (indeed, we can replace every vertex of $N$ outside of $Q$ by the closest point in the boundary of $Q$; such replacements do not increase the weight of $N$).
In follows that the weight of every edge is at most $\text{diam}(Q)=\sqrt{d}$.
Consequently,
\[|E(N)|\geq \frac{\|N\|}{\max_{e\in E(N)} \|e\|} =\frac{\Omega_d(\eps^{1-d})}{\sqrt{d}}=\Omega_d(\eps^{1-d}).\]
The sparsity of $N$ is
$|E(N)|/|S|=\Omega_d(\eps^{1-d}/\eps^{(1-d)/2})=\Omega_d(\eps^{(1-d)/2})$, as required.

The MST for the point set $S$ contains one unit-weight edge between $S_0$ and $S_1$, and the remaining $|S|-2$ edges each have weight $d\sqrt{\eps}$, which is the minimum distance between lattice points in $L$ (see~\cite{SteeleS89} for the asymptotic behavior of the MST of a section of the lattice). Therefore $\|\MST(S)\|=1+(|S|-2)d\sqrt{\eps}=\Theta_d(\eps^{1-d/2})$.
It follows that the lightness of $N$ is $\|N\|/\|\MST(S)\|=\Omega_d(\eps^{1-d}/\eps^{1-d/2})=\Omega_d(\eps^{-d/2})$, as claimed.
This completes the proof when $n=\Theta_d(\eps^{(1-d)/2})$.
\medskip\\
\textbf{\sffamily General Case.}
Finally, if $n\geq |S|=\Theta_d(\eps^{(1-d)/2})$, we can generalize
the above construction by duplication. Assume that $n=k\, |S|$ for some integer $k\geq 1$.
Let $Q_1,\ldots ,Q_k$, be disjoint axis-aligned unit hypercubes,
such that they each have an edge along the $x_1$-axis,
and two consecutive cubes are at distance 3 apart.
Let $S'$ be the union of $k$ translates of the point set $S$,
on the boundaries of $Q_1,\ldots, Q_k$.
Let $N'$ be a Euclidean Steiner $(1+\eps)$-spanner for $S'$.

\enlargethispage{-\baselineskip}
Since the ellipses induced by point pairs in different copies of $S$ are disjoint,
we have $\|N'\|\geq k\, \|N\|=\Omega_d(k\eps^{1-d})$ and $|E(N')|\geq k\, |E(N)|$.
This immediately implies that the sparsity of $N'$ is $|E(N')|/n  =|E(N)|/|S| \geq \Omega_d(\eps^{(1-d)/2})$.

The MST of $S'$ consists of $k$ translated copies of $\text{MST}(S)$ and
$k-1$ edges of weight 3 between consecutive copies. That is,
$\|\MST(S')\|=k\, \|\text{MST}(S)\|+3(k-1)=\Theta_d(k\eps^{1-d/2})$.
It follows that the lightness of $N'$ is $\Omega_d(\eps^{-d/2})$, as claimed.
\end{proof}

\section{Upper Bound}
\label{sec:upper}

In this section, we prove Theorem~\ref{thm:upper} and construct, for every $\eps>0$ and every set of $n$ points in $\mathbb{R}^2$, a Euclidean Steiner $(1+\eps)$-spanner of lightness $O(\eps^{-1}\log n)$. This matches the lower bound of Theorem~\ref{thm:lb} up to a $O(\log n)$ factor, and improves upon the previous bound of $O(\eps^{-1}\log \Delta)$ by Le and Solomon~\cite[Theorem~1.2]{le2020light}.

\subparagraph{Directional \texorpdfstring{\boldmath $(1+\eps)$}{(1+epsilon)}-spanners.}
Let $S$ be a set of $n$ points in the plane. Assume, without loss of generality, that $\diam(S)\geq 1/2$ and $S\subset [0,1]^2$. Then the weight of the Euclidean spanning tree of $S$ is bounded by $\|\MST(S)\|\geq \diam(S)\geq 1/2$, and $\|\MST(S)\|\leq O(n^{1/2})$ by a classical result by Few~\cite{Few55}. Both bounds are tight in the worst case.
Note that the weight of the $\MST$ is in a rather broad range, which makes it challenging to bound the weight of the Steiner $(1+\eps)$-spanner by $O(\|\MST(S)\|\cdot \eps^{-1})$.

The \emph{direction} of a line $L$ in the plane is given by the counterclockwise angle $\theta\in [0,\pi)$ between the positive $x$-axis and $L$. A line segment $pq$ inherits its direction from its supporting line. The \emph{distance} between directions $\theta_1,\theta_2\in [0,\pi)$, of lines $L_1$ and $L_2$, resp., is \[\angle(L_1,L_2)=\min\{|\theta_1-\theta_2|, \pi-|\theta_1-\theta_2|\}.\]
For an interval $D\subset [0,\pi)$ of directions, we construct a  Euclidean Steiner $(1+\eps)$-spanner restricted to points pairs whose directions are in $D$.
We define a spanner in this restricted sense as follows.

\begin{definition}
Let $S$ be a finite point set in $\mathbb{R}^2$, and let $D\subset [0,\pi)$ be an set of directions. A geometric graph $G$ is a \emph{directional $(1+\eps)$-spanner} for $S$ and $D$ if
for every $a,b\in S$, where the direction of $ab$ is in $D$,
graph $G$ contains an $ab$-path of weight at most $(1+\eps)\|ab\|$.
\end{definition}

Our main lemma is the following.
\begin{restatable}{lemma}{directionallemma}
\label{lem:dir}
For a set $S$ of $n$ points in the plane, and for the interval  $D=[\frac{\pi-\sqrt{\eps}}{8},\frac{\pi+\sqrt{\eps}}{8}]$ of directions, there exists
a directional $(1+\eps)$-spanner $G$ of weight $O(\|\MST(S)\|\eps^{-1/2}\log n)$.
\end{restatable}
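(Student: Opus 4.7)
The plan is to combine an $O(\log n)$-scale dyadic decomposition of pair lengths with Solomon's shallow-light trees (Lemma~\ref{lem:shallow}) to build the directional spanner. After rotating so that $D$ is centered at the horizontal direction, every pair $(a,b)$ with direction in $D$ satisfies $|a_y-b_y|\le(\sqrt{\eps}/16)\,\|ab\|$, so the pair is essentially horizontal up to a $\sqrt{\eps}$-scale perturbation. This is precisely the aspect ratio of Solomon's SLT, which is what makes the $\eps^{-1/2}$ factor in the bound attainable.

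First I would apply the spread-reduction discretization of Chan et al.~\cite{DBLP:journals/siamcomp/ChanLNS15} so that $\log\Delta=O(\log n)$ at only a constant factor overhead in spanner weight. Then I would split the pair lengths into $O(\log n)$ dyadic scales $s_k=2^{-k}$, and for each scale $s$ build a subgraph $G_s$ that is a directional $(1+\eps)$-spanner for pairs with $\|ab\|\in[s,2s)$ and direction in $D$, of weight $O(\|\MST(S)\|\,\eps^{-1/2})$. Summing the weights of the $G_s$ over the $O(\log n)$ scales then yields the claimed $O(\|\MST(S)\|\,\eps^{-1/2}\log n)$ bound.

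For the per-scale construction at scale $s$, I would tile the plane with two families of horizontal slabs of height $h=\Theta(\sqrt{\eps}\,s)$ shifted by $h/2$, so that each relevant pair lies inside a single slab in at least one of the two tilings. Within each slab $\sigma$, I would place a horizontal ``spine'' segment along its midline spanning the $x$-extent of $S\cap\sigma$, and attach points of $S\cap\sigma$ to this spine via SLT-based fans whose ``spokes'' make angle $\Theta(\sqrt{\eps})$ with the spine. The SLT guarantees that each attached point $p$ reaches every Steiner point on a suitable sub-interval of the spine via a $(1+\eps)$-stretch path, so a pair $(a,b)$ at scale $s$ inside $\sigma$ can be routed $a\to c\to b$ through a common spine Steiner point $c$ located near the intersection of segment $ab$ with the spine. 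A short angle-and-detour calculation, of the same flavor as the estimates in the proof of Lemma~\ref{lem:parallel+}, yields an overall stretch of $(1+O(\eps))$, which is rescaled to $(1+\eps)$ by adjusting hidden constants.

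The main obstacle is the weight accounting: showing that the per-scale total of spine segments plus SLT attachments is $O(\|\MST(S)\|\,\eps^{-1/2})$. The key subtlety is that attachments cannot be made per input point, since that would charge each of $n$ points at each of $O(\log n)$ scales and blow up the budget; instead, the points in each slab must be grouped into $O(1)$ representatives per horizontal block of length $\Theta(s)$, so the total number of attachments per scale is proportional to an MST-like quantity at that scale. The spine segments are then charged to MST edges of length $\Theta(s)$ lying nearby in the slab, and a standard packing argument ensures that no MST edge is charged more than $O(1)$ times per scale. Combining the packing bound with the $O(\eps^{-1/2})$ weight of each SLT fan gives the per-scale bound, and summing over scales completes the proof.
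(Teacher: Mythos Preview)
Your approach is substantively different from the paper's. The paper does not use dyadic scales or spread reduction at all. Instead, it embeds the rectilinear MST $T$ inside the bounding box $R$ of $S$; the planar graph $T\cup\partial R$ cuts $R$ into rectilinear faces of total perimeter $O(\|\MST(S)\|)$, and then the de~Berg--van~Kreveld theorem rectangulates each face, giving a rectangulation of total perimeter $O(\|\MST(S)\|\log n)$---this is the sole source of the $\log n$ factor. Each resulting rectangle is handled by Lemma~\ref{lem:rect2} at cost $O(\per(R)\eps^{-1/2})$, and the $ab$-path for a pair is obtained by concatenating the sub-paths in the rectangles that the segment $ab$ crosses. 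Your route (essentially the Le--Solomon $\log\Delta$ construction plus the Chan et al.\ discretization, which the paper itself notes as an alternative) is a legitimate strategy for the same bound, but your sketch has two real gaps.

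First, charging ``spine segments to MST edges of length $\Theta(s)$ lying nearby in the slab'' is incorrect: there need not be any MST edge of that length near a scale-$s$ spine (take $n$ equally spaced collinear points; at scale $s=n/2$ the spine has length $\Theta(n)$ but every MST edge has length $1$). The per-scale bound $O(\|\MST\|\eps^{-1/2})$ is salvageable via an occupied-cell count (the number of nonempty $s\times\sqrt{\eps}s$ cells is $O(1+\|\MST\|/(\sqrt{\eps}s))$), but not by the charging you describe. Second, and more seriously, your SLT orientation is the wrong one: attaching an input point $p$ at height $\Theta(\sqrt{\eps}s)$ above the spine to spine points spread over width $\Theta(s)$ is the \emph{reciprocal} of the aspect ratio in Lemma~\ref{lem:shallow}, so that lemma does not give a weight-$O(s)$ fan with $(1+\eps)$-stretch to the needed spine interval; and if instead you route each point through its block representative first, the intra-block detour $\|a-r_a\|$ is $\Theta(s)$, which destroys the stretch. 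What actually works is to orient the SLT along the long axis of each $s\times\sqrt{\eps}s$ block---source at one short side, targets at the other---exactly as in the paper's Lemma~\ref{lem:rect1}; this is a genuine change to your per-scale construction, not a constant adjustment.
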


We prove Lemma~\ref{lem:dir} in Section~\ref{ssec:rectangulations} below.
Here we show that Lemma~\ref{lem:dir} implies Theorem~\ref{thm:upper}.

\upperboundtheorem*
\begin{proof}[Proof of Theorem~\ref{thm:upper}]
Let $S$ be a set of $n$ points in the plane. Let $\eps>0$ be given, and let $k=\Theta(\eps^{-1/2})$ be an integer.
We partition the space of directions into $k$ intervals as
\[[0,\pi)=\bigcup_{i=1}^{k} D_i,
\mbox{ \rm where }
D_i=\left[\frac{(i-1)\,\pi}{k},\frac{i\,\pi}{k}\right)
\mbox{ \rm for }
i\in \{1,\ldots , k\}.\]
By Lemma~\ref{lem:dir}, for $i=1,\ldots ,k$, there exists a geometric graph $N_i$ of weight $O(\|\MST\| \log (n)\cdot \eps^{-1/2})$ such that for every point pair $a,b\in S$, if the direction of $ab$ is in $D_i$, then $N_i$ contains an $ab$-path of weight at most $(1+\eps)\|ab\|$.

\begin{figure}[htbp]
\centering
\includegraphics[width=0.9\textwidth]{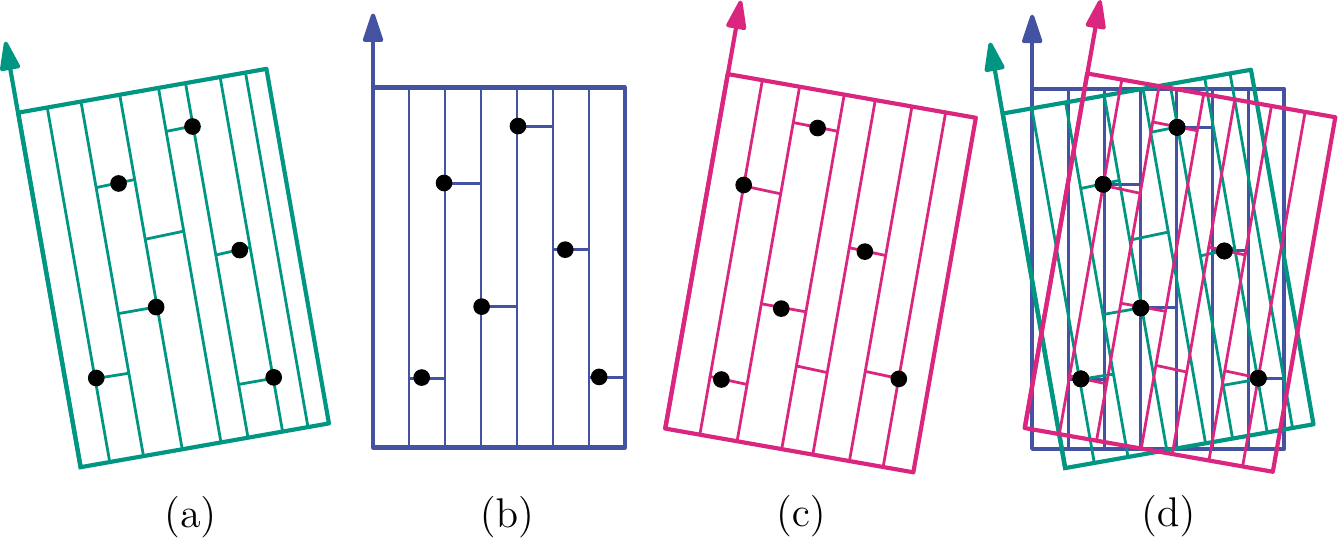}
\caption{(a)--(c) Schematic figures for directional spanners $N_1$, $N_2$, and $N_3$ for three disjoint intervals of directions, for a set of 6 points in the plane.
(d) The union $N_1\cup N_2\cup N_3$ of the networks.}
\label{fig:directional-rectangulation}
\end{figure}

Let $N=\bigcup_{i=1}^k N_i$ be the union of the networks $N_i$ for $i\in \{1,\ldots ,k\}$; see Figure~\ref{fig:directional-rectangulation} for an illustration.
Since $[0,\pi)=\bigcup_{i=1}^k D_i$, the graph $N$ contains a path of weight at most $(1+\eps)\|ab\|$ for all point pairs $a,b\in S$, and so $N$ is a Euclidean Steiner $(1+\eps)$-spanner for $S$.
The weight of $N$ is
\[\|N\|=\sum_{i=1}^k \|N_i\|
\leq k\cdot O\left(\frac{\|\MST(S)\|\log n}{\sqrt{\eps}}\right)
\leq O\left(\frac{\|\MST(S)\|\log n}{\eps}\right),\]
as required.
\end{proof}

The second component of our approach is a subdivision of the bounding box of $S$ into regions that are long and skinny in one particular direction. We start with discussing the special cases of a single rectangle.

\subsection{Rectangles}
\label{ssec:rectangle}

We illustrate our general strategy with a simple special case,
where the points in $S$ lie on the boundary of an axis-aligned rectangle.
We first assume that $R$ is narrow and tall, and we construct an
directional $(1+\eps)$-spanner for an interval of near-vertical directions.
For further reference, we define the \emph{aspect ratio} of an
axis-parallel rectangle $R$ by
\[\aratio(R)=\frac{\wdth(R)}{\hght(R)}.\]

\begin{lemma}\label{lem:rect1}
Let $R$ be an axis-aligned rectangle with $\frac18\, \sqrt{\eps}\leq \aratio(R)\leq \frac14\,\sqrt{\eps}$.
Then for a finite point set $S$ on the boundary of $R$, and for the interval  $D=[\frac{\pi-\sqrt{\eps}}{8},\frac{\pi+\sqrt{\eps}}{8}]$ of directions, there exists
a directional $(1+\eps)$-spanner $G$ with $\|G\|\leq O(\hght(R))$.
\end{lemma}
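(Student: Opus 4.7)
The plan is to construct a Steiner directional spanner of weight $O(h)$ by exploiting the matching between the aspect ratio $\aratio(R)=\Theta(\sqrt{\eps})$ of $R$ and the aspect ratio $\sqrt{2\eps+\eps^2}\approx\sqrt{\eps}$ of a $(1+\eps)$-ellipse whose major axis aligns with $R$'s long side. Let $h=\hght(R)$ and $w=\wdth(R)\le\sqrt{\eps}\,h/4$; choose coordinates so that $R$'s long side is vertical and $D$ consists of directions within $\sqrt{\eps}/8$ of the $y$-axis. Then for any $a,b\in S$ with direction in $D$, the horizontal offset is $|a_x-b_x|\le(\sqrt{\eps}/8)|a_y-b_y|\le w$ and $\|ab\|=|a_y-b_y|\bigl(1+O(\eps)\bigr)$.

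I combine three ingredients. (i) The two long sides of $R$ are included as vertical segments (total weight $2h$); any pair with both endpoints on a common long side is spanned directly with stretch $1$. (ii) I apply Lemma~\ref{lem:shallow} twice, once per short side: place Steiner sources $s_T$ and $s_B$ on the central axis $x=w/2$ at vertical distance $w\,\eps^{-1/2}=\Theta(h)$ from the top and from the bottom short side, respectively (both sources lie inside $R$ since $w\,\eps^{-1/2}<h$). After scaling by $w$, each application of Lemma~\ref{lem:shallow} yields a shallow-light tree of weight $O(w\,\eps^{-1/2})=O(h)$ containing a $(1+\eps)$-path from $s_T$ (resp.\ $s_B$) to every point of $S$ on the top (resp.\ bottom) short side. (iii) I add the central segment $s_Ts_B$ and short horizontal stubs of length $O(w)$ joining $s_T,s_B$ to the two long sides. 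The total weight is $O(h)$.

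For the stretch verification, the main case is $a$ on the top and $b$ on the bottom. Concatenating the SLT path from $a$ to $s_T$, the central segment $s_Ts_B$, and the SLT path from $s_B$ to $b$ yields, by Lemma~\ref{lem:shallow}, a total length at most $(1+\eps)(\|as_T\|+\|s_Bb\|)+\|s_Ts_B\|$. A second-order Taylor expansion of the three square roots shows that the ``zig-zag excess'' $E:=\|as_T\|+\|s_Ts_B\|+\|s_Bb\|-\|ab\|$ is $O(w^2/h)=O(\eps\,h)$; substituting gives a path of length at most $(1+\eps)\|ab\|+(1+\eps)E-\eps\|s_Ts_B\|$, which is $\le(1+\eps)\|ab\|$ after rescaling $\eps$ by an absolute constant (using $\|s_Ts_B\|=\Theta(h)$). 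Pairs on opposite long sides are ruled out by the direction bound ($|a_x-b_x|=w$ combined with the near-vertical condition forces $|a_y-b_y|\ge 8w/\sqrt{\eps}\ge h$, leaving only corner pairs that are handled by a constant number of direct edges). Mixed pairs with one endpoint on a long side and one on a short side are handled by first travelling along the long side to the appropriate source height and then entering the SLT, and the same Taylor-expansion argument applies.

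The main obstacle is handling short near-vertical pairs that lie close to a common corner of $R$: when both endpoints are within distance $O(w)$ of the same corner and $\|ab\|\ll h$, the SLT detour via $s_T$ or $s_B$ is much longer than $\|ab\|$, while the pure boundary detour has stretch $1+\Theta(\sqrt{\eps})$ rather than $(1+\eps)$. The fix is to include the nearby long-side points in each SLT (so that short pairs share a low-lying LCA inside the SLT) or, equivalently, to augment $G$ with a small auxiliary sub-SLT anchored at each corner, of total weight $O(h)$. The quantitative input for the Taylor expansion of the zig-zag excess, $w=O(\sqrt{\eps}\,h)$ and hence $w^2/h=O(\eps h)$, remains the crux of the analysis.
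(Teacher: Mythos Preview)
Your construction for top--bottom pairs (two sources $s_T,s_B$ on the central axis, joined by a segment) is a legitimate variant of the paper's construction (a single source $c$ at the midpoint of a long side, with two SLTs to the two short sides); both give weight $O(h)$ and stretch $1+O(\eps)$.

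The genuine gap is in the adjacent-side case, and it is broader than just ``short pairs near a common corner.'' Take $a=(0,a_y)$ on the left side with $a_y$ comparable to $h$, and $b=(b_x,0)$ on the bottom side with direction in $D$. Your route is: down the long side to height $w\eps^{-1/2}$, across the horizontal stub of length $w/2$ to $s_B$, then through the SLT to $b$. The stub is a \emph{pure horizontal step}, so it contributes its full length $w/2=\Theta(\sqrt{\eps}\,h)$ to the excess over $\|ab\|$; the Taylor argument (which turns a horizontal offset $\delta$ into a quadratic correction $\delta^2/\|ab\|$) applies only when the offset is absorbed into a nearly vertical diagonal, not to a standalone horizontal edge. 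Thus the stretch is $1+\Theta(\sqrt{\eps})$, not $1+\eps$, regardless of how close $a$ is to a corner. Your proposed fixes do not repair this: Lemma~\ref{lem:shallow} is stated for a \emph{collinear} target set, so you cannot simply ``include the nearby long-side points'' in the bottom-side SLT; and a single auxiliary SLT at a corner serves only one scale, whereas $a_y$ ranges over all of $(0,h]$.

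The paper's remedy is to root the SLTs on the long side itself, eliminating the horizontal detour, and to use a geometric sequence of scales: for each $i\ge 1$ a tree $T_i$ rooted at $p_i=(0,h/2^i)$ reaching the initial portion $[0,w/2^{i-1}]$ of the bottom side. For $a$ at height in $[h/2^i,h/2^{i-1}]$ one first drops vertically to $p_i$ (no horizontal loss) and then uses $T_i$; the direction constraint forces $b_x=O(\sqrt{\eps}\,a_y)=O(w/2^i)$, so $T_i$ is at the right scale. The weights $\|T_i\|=O(h/2^i)$ sum to $O(h)$. This geometric family of long-side-rooted SLTs is the missing ingredient in your proposal.
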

\begin{proof}
We construct a graph $G$ as a union of the boundary of $R$ and a finite number of
shallow-light trees. If both $a$ and $b$ are in the same side of $R$, then
the boundary of $R$ contains an $ab$-path of weight $\|ab\|$.
We next consider cases in which $a$ and $b$ are in different sides of $R$.
Since $\frac18\, \sqrt{\eps}\leq \aratio(R)$, if $a$ and $b$ are in the interior
of the left and right side of $R$, resp., then the direction of the segment $ab$
falls outside of $D$.

Let $c$ be the center of the left edge of $R$; refer to Fig.~\ref{fig:rectangle}(a).
Take two shallow-light trees between $c$ and each horizontal side of $R$ with
parameter $\eps'=\eps/4$. By Lemma~\ref{lem:shallow}, the weight of the two
trees is $O(\hght(R))$. If $a$ and $b$ are on the top and bottom sides of $R$, resp.,
then $\|ab\|\geq \hght(R)$.
The union of the two shallow-light trees rooted at $c$ contains
an $ab$-path of length
\begin{align*}
(1+\eps')(\|as\|+\|sb\|)
& \leq (1+\eps')\cdot 2\cdot \sqrt{\left(\frac{\hght(R)}{2}\right)^2+\left(\frac{\wdth(R)}{2}\right)^2}\\
&< \left(1+\frac{\eps}{4}\right)\left(1+\frac{\eps}{2}\right)\hght(R)\\
&< (1+\eps)\,\hght(R)\\
&\leq (1+\eps)\|ab\|.
\end{align*}

It remains to construct $ab$-paths for point pairs on adjacent sides of $R$.
We describe the construction for a left and bottom sides of $R$;
the constructions for all other pairs of adjacent sides is analogous,
and we can take the union of all constructions.
Without loss of generality, assume that the lower-left corner of $R$ is
the origin $o$; see Fig.~\ref{fig:rectangle}(b).
For every positive integer $i\in \mathbb{N}^+$, let $p_i=(0,\hght(R)/2^i)$,
and $q_i=(\wdth(R)/2^{i-1})$. Note that $p_i$ is on the left side of $R$,
and $q_i$ is on the bottom side of $R$ for all $i\in \mathbb{N}^+$.
By Lemma~\ref{lem:shallow}, there exists a shallow-light tree $T_i$ with parameter $\eps'=\eps/4$ of weight $O(\hght(R)/2^i)$ from the root $p_i$ to the line segment $oq_i$. Between any point $a\in p_{i-1}p_i$ and $b\in q_iq_{i+1}$, we can combine a
vertical segment $ap_i$ with a path in the tree $T_i$ from $p_i$
to $b$ to obtain a path of weight at most $(1+\eps')\|ab\|$.

The weight of the union of the trees $T_i$, for all $i\in \mathbb{N}^+$, is
$O(\sum_{i=1}^{\infty}\hght(R)/2^i)=O(\hght(R))$. In fact, we do not need
infinitely many trees, since $S$ is finite, hence it contains a finite number
of point in the interior of the left side of $R$. It suffices to construct the
trees $T_i$, $i=1,\ldots , m$, such that all points in $S$ in the left side of
$R$ are at or above $p_m$.
\end{proof}

\begin{figure}[htbp]
\centering
\includegraphics[width=\textwidth]{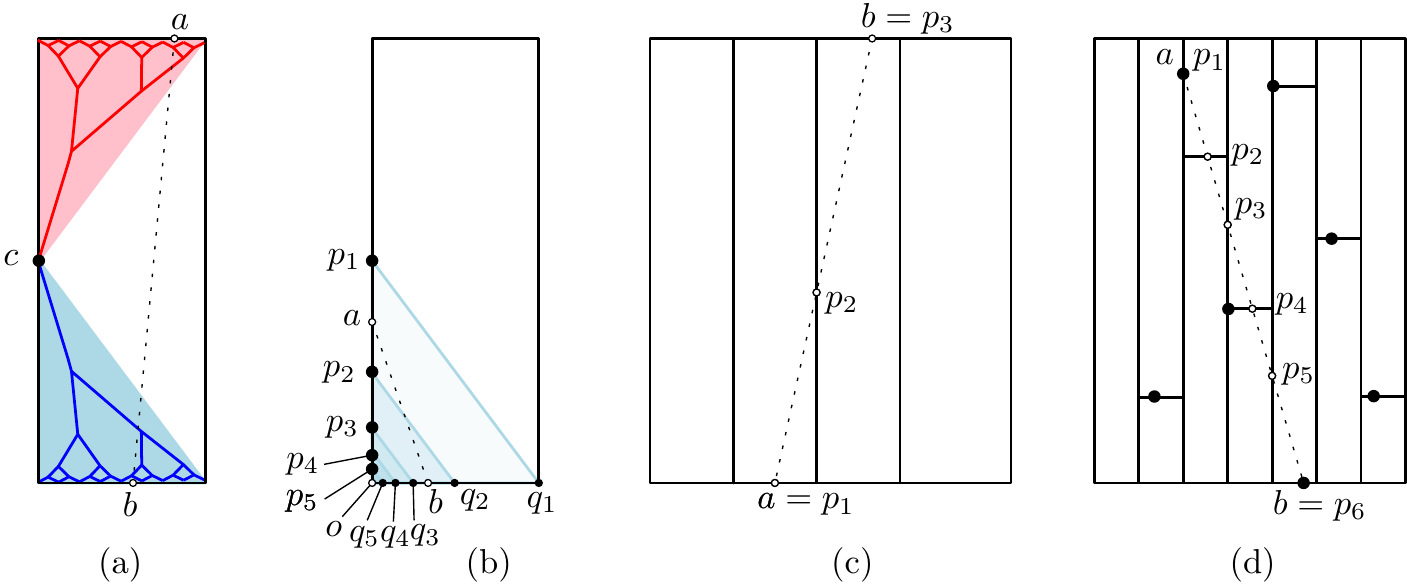}
\caption{(a) A rectangle $R$ with $\frac18\,\sqrt{\eps}\leq \aratio(R)\leq \frac14\,\sqrt{\eps}$,
and two shallow-light trees rooted at the midpoint $c$ of the left side of $R$.
(b) A sequence of shallow-light trees rooted at $p_1,\ldots ,p_m$ on the left side of $R$.
(c) A subdivision of a rectangle $R$.  (d) A rectangulation of the bounding box of $S$ into rectangels $R$ with $ \aratio(R)\leq \frac14\,\sqrt{\eps} <\aratio(R)$.}
\label{fig:rectangle}
\end{figure}

We can generalize Lemma~\ref{lem:rect1} to axis-aligned rectangles of arbitrary aspect ratio.

\begin{lemma}
\label{lem:rect2}
Let $R$ be an axis-aligned rectangle. Then for a finite point set $S$ on the boundary of $R$,
and for the interval $D=[\frac{\pi-\sqrt{\eps}}{8},\frac{\pi+\sqrt{\eps}}{8}]$ of directions, there exists a directional $(1+\eps)$-spanner $G$ of weight $O(\hght(R)+\wdth(R)/\sqrt{\eps})$.
\end{lemma}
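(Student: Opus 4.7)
The plan is to partition $R$ with equally spaced vertical cuts into thin sub-rectangles $R_1,\ldots, R_k$, each of aspect ratio in the range $[\sqrt{\eps}/8,\sqrt{\eps}/4]$ required by Lemma~\ref{lem:rect1}, apply that lemma separately to each strip, and return the union of the resulting directional spanners. Since each per-strip spanner has weight $O(\hght(R_i)) = O(\hght(R))$ and we use $k = \Theta\!\left(1 + \wdth(R)/(\sqrt{\eps}\,\hght(R))\right)$ strips, the total weight is $O(\hght(R) + \wdth(R)/\sqrt{\eps})$, matching the claim.

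More precisely, if $\aratio(R) \leq \sqrt{\eps}/4$, I set $k=1$ and apply Lemma~\ref{lem:rect1} to $R$ itself (its construction and stretch analysis remain valid for arbitrarily small aspect ratios, since the hypothesis $\aratio(R)\geq \sqrt{\eps}/8$ there is only used in the trivial opposite-vertical-side case, which remains trivial for thinner $R$). Otherwise $\aratio(R) > \sqrt{\eps}/4$; I choose $k = \lceil 4\wdth(R)/(\sqrt{\eps}\,\hght(R))\rceil$, partition $R$ into $k$ strips of equal width in $[\sqrt{\eps}/8, \sqrt{\eps}/4]\cdot\hght(R)$, and apply Lemma~\ref{lem:rect1} on each strip $R_i$ with an input set $S_i$ specified below to obtain spanners $G_i$, each of weight $O(\hght(R))$.

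The delicate step is handling pairs $a,b\in S$ that fall in different strips. The key observation is that $D$ has half-width $\sqrt{\eps}/8$ around its center direction, and the rectangle is aligned so that strips are parallel to that center direction; hence the transverse extent of any segment $ab$ in direction $D$ is at most $\tan(\sqrt{\eps}/8)\cdot \hght(R) < \sqrt{\eps}/8\cdot \hght(R)$, which is at most one strip width. Consequently $a$ and $b$ either lie in the same strip (handled directly by $G_i$) or in two adjacent strips $R_i, R_{i+1}$. For each cross-strip pair I include the intersection $s = ab\cap(R_i\cap R_{i+1})$ as an additional Steiner input point in both $S_i$ and $S_{i+1}$. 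Then $G_i$ contains an $as$-path of weight at most $(1+\eps)\|as\|$ and $G_{i+1}$ contains an $sb$-path of weight at most $(1+\eps)\|sb\|$; their concatenation is an $ab$-path of weight at most $(1+\eps)\|ab\|$ because $s$ lies on the segment $ab$.

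The main obstacle is verifying that adding these Steiner crossing points (potentially $O(n^2)$ of them across all pairs) does not inflate any per-strip spanner beyond $O(\hght(R_i))$. I would revisit Lemma~\ref{lem:rect1}'s construction and confirm that it consists of $\partial R_i$ together with a geometrically decreasing family of shallow-light trees, with the tree rooted at height $\hght(R_i)/2^j$ contributing weight $O(\hght(R_i)/2^j)$; the total telescopes to $O(\hght(R_i))$ independently of $|S_i|$. Granted this, $\bigcup_{i=1}^k G_i$ is the required directional $(1+\eps)$-spanner with total weight $O(\hght(R) + \wdth(R)/\sqrt{\eps})$.
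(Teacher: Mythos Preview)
Your proposal is correct and follows essentially the same strategy as the paper: slice $R$ into vertical strips of aspect ratio in $[\sqrt{\eps}/8,\sqrt{\eps}/4]$, apply Lemma~\ref{lem:rect1} to each strip with an input set augmented by the crossing points of relevant segments $ab$ with the strip boundaries, and concatenate the per-strip paths. The weight analysis is identical.

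There are two minor deviations, both harmless. First, for the case $\aratio(R)<\sqrt{\eps}/8$ the paper instead subdivides $R$ by \emph{horizontal} cuts into pieces of aspect ratio exactly $\sqrt{\eps}/4$ and then applies Lemma~\ref{lem:rect1}; your shortcut of invoking Lemma~\ref{lem:rect1} directly on the full (even thinner) $R$ is valid for the reason you state, but note that you are implicitly also relying on the shallow-light tree bound of Lemma~\ref{lem:shallow} to remain $O(\hght(R))$ when the source is \emph{higher} than $\eps^{-1/2}$ relative to the base---which is true but worth saying. Second, your observation that any segment with direction in $D$ spans at most two adjacent strips is correct but unnecessary: the paper simply places \emph{all} crossing points of segments spanned by $S$ with $\partial R_i$ into $S_i$ and lets Lemma~\ref{lem:rect1} handle arbitrarily many pieces of a single segment; since (as you yourself note) the weight bound in Lemma~\ref{lem:rect1} is independent of $|S_i|$, this costs nothing. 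Your adjacency observation buys no savings here, though it is a pleasant geometric fact.
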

\begin{proof}
If $\frac18\, \sqrt{\eps}\leq \aratio(R)\leq \frac14\, \sqrt{\eps}$, then Lemma~\ref{lem:rect1}
completes the proof. Otherwise, we greedily subdivide $R$ by parallel lines as follows.
First assume that $\frac14\, \sqrt{\eps}< \aratio(R)$; refer to Fig.~\ref{fig:rectangle}(c).
For a vertical line $L$, denote by $L^-$ and $L^+$, resp.,
the halfplanes on the left and right of $L$.
Let $L$ be the leftmost vertical line such that $\aratio(R\cap L^-)=\frac18\,\sqrt{\eps}$.
Then subdivide $R$ into $R\cap L^-$ and $R\cap L^+$ by a vertical segment
of weight $\hght(R)$, and recurse on $R\cap L^+$.
All subdivision edges are vertical, of weight $\hght(R)$, and their total weight is
\begin{equation}\label{eq:cut1}
\hght(R)\cdot \left\lfloor \frac{\wdth(R)/\hght(R)}{\frac18\,\sqrt{\eps}}\right\rfloor
\leq O\left(\frac{\wdth(R)}{\sqrt{\eps}}\right).
\end{equation}

Similarly, if $\aratio(R)<\frac18\, \sqrt{\eps}$, we can greedily subdivide $R$ by horizontal lines into axis-parallel rectangles.
For a horizontal line $L$, denote by $L^\uparrow$ and $L^\downarrow$, resp.,
the halfplanes above and below $L$. Let $L$ be the highest horizontal line such that $\aratio(R\cap L^\uparrow)=\frac14\,\sqrt{\eps}$.
Then subdivide $R$ into $R\cap L^\uparrow$ and $R\cap L^\downarrow$ by a horizontal segment
of weight $\wdth(R)$, and recurse on $R\cap L^\downarrow$.
In this case, all subdivision edges are horizontal, they all are of weight $\wdth(R)$, and their total weight is
\begin{equation}\label{eq:cut2}
\wdth(R)\cdot \left\lfloor \frac{\frac14\,\sqrt{\eps}}{\wdth(R)/\hght(R)}\right\rfloor
\leq O\left(\sqrt{\eps}\cdot\hght(R)\right).
\end{equation}

Assume that $R$ has been subdivided into rectangles $R_1,\ldots ,R_k$, such that
$\frac18\, \sqrt{\eps}\leq \aratio(R_i)\leq \frac14\, \sqrt{\eps}$.
For every $i\in \{1,\ldots, k\}$, let $S_i$ be the set of intersection points between the boundary of $R_i$ and the line segments spanned by $S$. For the point set $S_i$ and the directions $D$, Lemma~\ref{lem:rect1} yields a directional $(1+\eps)$-spanner $G_i$ of weight  $\|G_i\|=\hght(R_i)$.

Let $G=\bigcup_{i=1}^k G_i$. From \eqref{eq:cut1} and \eqref{eq:cut2}, we get
$\|G\|=\sum_{i=1}^k \|G_i\| =O(\sum_{i=1}^k\hght(R_i)) =
O(\hght(R)+\wdth(R)/\sqrt{\eps})$, as required.
It remains to show that $G$ is a directional $(1+\eps)$-spanner.
Let $a,b\in S$ with direction in $D$; see Fig.~\ref{fig:rectangle}(c). The vertical edges of $R_1,\ldots , R_k$ subdivide $ab$ into a path of collinear line segments $a=p_0,\ldots ,p_\ell=b$. Each segment $p_{i-1}p_i$ lies in some rectangle $R_j$ between points $p_{i-1},p_i\in S_j$, and so $G_j$ contains a $p_{i-1}p_i$-path of weight at most $(1+\eps)\|p_{i-1}p_i\|$.
The concatenation of these paths is an $ab$-path of weight at most $\sum_{i=1}^\ell (1+\eps)\|p_{i-1}p_i\|= (1+\eps)\|ab\|$.
\end{proof}

\subsection{Rectangulations}
\label{ssec:rectangulations}

Let $\per(P)$ denote the perimeter of $P$. A polygon $P$ is \emph{rectilinear} if every edge is horizontal or vertical. A \emph{rectangulation} of polygon $P$ is a subdivision of $P$ into axis-parallel rectangles. De~Berg and van~Kreveld~\cite{BergK94} proved that for a rectilinear simple polygon $P$ with $n$ vertices, one can efficiently compute a rectangulation of weight $O(\per(P)\log n)$, and this bound is the best possible (already for stair-case polygons).

For an arbitrary set $S$ of $n$ points in the plane, we can construct a rectangulation of
the axis-aligned bounding box of $S$ with weight $O(\|\MST\|\log n)$.
Combining such a rectangulation with Lemma~\ref{lem:rect2}, we are now ready to prove Lemma~\ref{lem:dir}.

\directionallemma*
\begin{proof}
Let $T$ be the rectilinear MST of $S$, that is, a spanning tree of minimum weight w.r.t.\ $L_1$-norm, realized in the plane such that every edge is an L-shape (the union of a horizontal and a vertical segment). It is well known that $\|T\|\leq \sqrt{2}\, \|\MST(S)\|$. Let $R$ be the minimum axis-aligned bounding box of $T$.
By the minimality of $R$, the boundary of $R$ contains at least two vertices of $T$.
Consequently, $T\cup \partial R$ is connected, and it subdivides the interior of $R$
into rectilinear simple polygons (\emph{faces})
of total weight at most $2(\|T\|+\per(R))=O(\|\MST(S)\|)$.

By the result of de~Berg and van~Kreveld~\cite{BergK94}, we can rectangulate each face of
$T\cup \partial R$. Let $\mathcal{R}$ denote the resulting rectangulation.
Since every face has $O(n)$ vertices, and every edge
is on the boundary of at most two faces, the total perimeter of the rectangles in $\mathcal{R}$
is $\sum_{R\in \mathcal{R}}\per(R)=O((\|T\|+\per(R))\log n)=O(\|\MST(S)\|\log n)$.

For every $R\in \mathcal{R}$, let $S(R)$ be the set of intersection points between the boundary of $R$ and the line segment induced by $S$. By Lemma~\ref{lem:rect2}, there exists a directional Euclidean Steiner $(1+\eps)$-spanner $G(R)$ for $S(R)$ of weight $O(\per(R)\cdot \eps^{-1/2})$. Let $G=\bigcup_{R\in \mathcal{R}} G(R)$. Its total weight $\|G\|=\sum_{R\in \mathcal{R}} O(\per(R)\cdot \eps^{-1/2}) = O(\|\MST(S)\|\, \eps^{-1/2}\log n)$.
We can verify that $G$ is a directional $(1+\eps)$-spanner for $S$, similarly to the proof of Lemma~\ref{lem:rect2}. Let $a,b\in S$ such that the directions of $ab$ is in $D$; see Fig.~\ref{fig:rectangle}(d).
The rectangulation subdivides $ab$ into a path of collinear segments $a=p_0,\ldots ,p_\ell=b$. Each segment $p_{i-1}p_i$ lies in some rectangle $R\in \mathcal{R}$ between points $p_{i-1},p_i\in S(R)$, and so $G(R)$ contains a $p_{i-1}p_i$-path of weight at most $(1+\eps)\|p_{i-1}p_i\|$. The concatenation of these paths is an $ab$-path of weight at most $\sum_{i=1}^\ell (1+\eps)\|p_{i-1}p_i\|= (1+\eps)\|ab\|$, as required.
\end{proof}

\begin{remark}
The $\log(n)$-factor in Theorem~\ref{thm:upper} is due to the rectangulations of rectilinear  polygons with $O(n)$ vertices.
Instead of rectangulations, one could use a minimum-weight Steiner subdivision into convex faces (assuming that Lemmas~\ref{lem:rect1}--\ref{lem:rect2} generalize to \emph{convex} polygons). However, this approach would not yield more than a $\log \log (n)$-factor improvement. Dumitrescu and T\'oth~\cite{DumitrescuT11} probed that every simple polygon $P$ with $n$ vertices admits a convex subdivision of weight $O(\per(P)\log n/\log \log n)$, and this bound is the best possible.
\end{remark}

\begin{remark}
Instead of a rectangulation, one could also use a subdivision into histograms.
A \emph{histogram} is a rectilinear simple polygon bounded by three axis-parallel line segments and one $x$- or $y$-monotone path. A classical \emph{window partition}~\cite{Link00,Suri90} subdivides a simple rectilinear polygon $P$ into histograms such that every axis-parallel line segment in $P$ intersects (\emph{stabs}) at most three histograms~\cite{Edelsbrunner1984167,Levcopoulos}. Due to the stabbing property, the total perimeter of the resulting histograms is $O(\per(P))$. For a point set $S$, this approach yields a histogram subdivision of the bounding box of $S$ with  weight $O(\|\MST(S)\|)$.
Very recently, Bhore and T\'{o}th~\cite{DBLP:journals/corr/abs-2012-02216} improved the upper bound $O(\eps^{-1}\log n)$ of Theorem~\ref{thm:upper} to $O(\eps^{-1})$ by combining directional spanners with a modified window partition.
\end{remark}

\enlargethispage{-1\baselineskip}
\section{Conclusions}
\label{sec:cons}

In this paper, we have studied Euclidean Steiner $(1+\epsilon)$-spanners under two optimization criteria, \emph{lightness} and \emph{sparsity}, and provided improved lower and upper bounds. Our upper bound of $O(\eps^{-1}\log n)$ on the minimum lightness of Steiner $(1+\eps)$-spanners in the Euclidean plane has recently been improved to the bound $O(\eps^{-1})$ in~\cite{DBLP:journals/corr/abs-2012-02216}, matching the lower bound of $\Omega(\eps^{-1})$ of Theorem~\ref{thm:lb}.
However, for lightness in dimensions $d\geq 3$, an $\tilde{\Theta}(\eps^{1/2})$-factor gap remains between the current upper bound $\tilde{O}(\eps^{-(d+1)/2})$~\cite[Theorem~1.6]{le2020unified} and the lower bound $\Omega(\eps^{-d/2})$ of Theorem~\ref{thm:lb}.

In Euclidean $d$-space, the same point sets (grids in two parallel hyperplanes) establish the lower bounds $\Omega(\eps^{-d/2})$ for lightness and $\Omega(\eps^{(1-d)/2})$ for sparsity for Steiner $(1+\eps)$-spanners (cf.~Theorem~\ref{thm:lb}).
Le and Solomon constructed spanners with sparsity $\tilde{O}(\eps^{(1-d)/2})$~\cite[Theorem~1.3]{le2019truly}, matching the lower bound in every dimension $d\in \mathbb{N}$, but the lightness of these spanners is significantly higher.
In dimensions $d\geq 3$, they construct spanners with lightness $\tilde{O}(\eps^{-(d+1)/2})$~\cite[Theorem~1.6]{le2020unified},
but these spanners have significantly higher sparsity.

We conjecture that a Euclidean Steiner $(1+\eps)$-spanner cannot simultaneously attain both lower bounds (lightness and sparsity) of Theorem~\ref{thm:lb}. Therefore, exploring trade-offs between lightness and sparsity in Euclidean $d$-space remains an open problem.

A critical aspect of graphs is their embeddibility in low-genus surfaces. A geometric graph in $\mathbb{R}^2$ is a \emph{plane graph} if no two edges cross each other. Note that every Steiner spanner $G=(V,E)$ for a point set $S$ can be turned into a plane graph (with the same stretch factor ratio and the same weight) by introducing Steiner points at every edge crossing. However, the number of new Steiner points would be proportional to $O(|E|^2)$, which is prohibitive. It remains an open problem to bound the sparsity of a \emph{plane} Steiner $(1+\eps)$-spanner for $n$ points in Euclidean plane, as a function of $n$ and $\eps$.
\pagebreak

Angles and directions play a crucial role in our lower bound analysis (Section~\ref{sec:lower}) and upper bound construction (Section~\ref{sec:upper}). While angles are invariant under rotations only in Euclidean spaces, they can be defined in any inner product space, such as $\mathbb{R}^d$ under $L_p$ norm, for $p\geq 2$. We leave it as an open problem to derive upper and lower bounds on the lightness and sparsity of Steiner $(1+\eps)$-spanners in other inner product spaces.
\enlargethispage{1.5\baselineskip}

\bibliographystyle{plainurl}
\bibliography{p013-Bhore}

\end{document}